\documentclass[11pt]{article}  

\usepackage{ifthen}
\usepackage{fullpage}
\usepackage{amsmath,amsfonts,amsthm}
\usepackage{enumitem}
\usepackage{graphicx}
\usepackage{algorithmic}
\usepackage{algorithm}

\newcommand{\F}{\mathcal{F}}
\newcommand{\dem}{\mathrm{D}}
\newcommand{\DT}{\tau}
\newcommand{\Paths}{\mathcal{P}}
\newcommand{\FP}{A}
\newcommand{\FT}{\tilde{A}}
\newcommand{\TP}{T^p}
\newcommand{\TF}{T^f}
\newcommand{\Pay}{\mathrm{Pay}}
\newcommand{\Cover}{\mathrm{Cov}}
\newcommand{\ST}{\mathcal{S}}
\newcommand{\cost}{\mathrm{cost}}
\newcommand{\D}{\mathcal{D}}
\newcommand{\X}{\mathcal{X}}
\newcommand{\Y}{\mathcal{Y}}
\newcommand{\U}{\Pi}
\newcommand{\OPT}{\mathrm{OPT}}
\newcommand{\meta}[1]{{\overline{#1}}}
\newcommand{\anc}{\mathrm{anc}}
\newcommand{\peer}{\mathrm{peer}}
\newcommand{\level}{\ell}
\newcommand{\FPay}{\mathrm{FPay}}

\newcommand{\argmin}{\operatorname{argmin}}
\newcommand{\argmax}{\operatorname{argmax}}

\newtheorem{theorem}{Theorem}

\newtheorem{lemma}[theorem]{Lemma}

\newtheorem{corollary}[theorem]{Corollary}

\title{Traffic-Redundancy Aware Network Design\thanks{This work was
    supported in part by NSF awards CCF-0643763 and CNS-0905134, and in part by a Sloan foundation fellowship.}}
\author{
Siddharth Barman\thanks{Computer Sciences Department, University of Wisconsin -- Madison. \tt{sid@cs.wisc.edu}.} 
\and
Shuchi Chawla\thanks{Computer Sciences Department, University of
  Wisconsin -- Madison. \tt{shuchi@cs.wisc.edu}.}  
}
\date{}

\begin{document}

\setcounter{page}{0}

\maketitle

\begin{abstract}
We consider network design problems for information networks where
routers can replicate data but cannot alter it. This functionality  allows the
network to eliminate data-redundancy in traffic, thereby saving on routing costs.
We consider two problems within this framework and design
approximation algorithms.

The first problem we study is the traffic-redundancy aware network
design (RAND) problem.  We are given a weighted graph over a single
server and many clients. The server owns a number of different data
packets and each client desires a subset of the packets; the client
demand sets form a laminar set system. Our goal is to connect every
client to the source via a single path, such that the collective cost
of the resulting network is minimized. Here the transportation cost
over an edge is its weight times times the number of {\em distinct}
packets that it carries.

The second problem is a facility location problem that we call
RAFL. Here the goal is to find an assignment from clients to
facilities such that the total cost of routing packets from the
facilities to clients (along unshared paths), plus the total cost of
``producing'' one copy of each desired packet at each facility is
minimized.

We present a constant factor approximation for the RAFL  and an $O(\log P)$ approximation for RAND, where $P$ is the total number of distinct packets. We remark that $P$ is always at most the number of different demand sets desired or the number of clients, and is generally much smaller.  
\end{abstract}

\thispagestyle{empty}
\newpage

\setcounter{page}{1}

\section{Introduction}
We consider network design problems for information
networks where edges can replicate data but cannot otherwise alter it.
In this setting our goal is to exploit the redundancy in the given
traffic matrix to save on routing costs. Formally, we are given a graph over a single server and many clients. The server has a universe of data packets available, and each client desires a subset of the packets. The goal is to determine a collection of paths, one from the source to each client, such that the total cost of routing is minimized. Here the cost of routing on an edge is proportional to the total size of the {\em distinct} packets that the edge carries. For example, if the edge belongs to two paths that each carry the same packet, then the edge only needs to route the packet once and not twice. We call this the traffic-redundancy aware network design problem, or RAND for short. 

% We study network design problems where redundancy in the traffic matrix can be exploited to obtain savings on routing costs. 

%netflix -- server has many movies, different people desire different movies at different rates depending on their devices, locations etc. What is the best way for netflix to stream these movies so that the smallest amount of network bandwidth is consumer? 
 %In its easiest setting, this is a multicasting problem and the solution if everyone wants the same movie at the same rate is min steiner tree.
 
RAND arises in networks that face a lot of data duplication. Consider for example a Netflix server serving movies to a large and varied clientele. Each client desires a certain subset of the movies. What routing paths should the server use to send the data to the clients so as to minimize its total bandwidth usage? If different movie streams involve disjoint sets of packets this boils down to setting up a single multicast tree, or solving the minimum Steiner tree problem, once per movie. However, the server may want to set up a single routing path for each client regardless of how many different movies the client desires. Moreover, clients desiring the same movie may desire it at different rates or qualities depending on their location or the device they are using. For example, a desktop user with a broadband connection may desire a high definition video, whereas a mobile phone user may be content with a much lower resolution. Then, the data sent to these clients is not identical but has some amount of overlap. The server can exploit this redundancy in traffic by using common paths for clients with similar demands, thereby saving on the actual amount of traffic routed. Redundancy in data can arise even across different movies when data streams are broken down into small enough packets. Anand et al.~\cite{AGA+08} show that this kind of traffic redundancy is highly prevalent in the Internet, and it can be eliminated across individual links by routers employing packet caches.

% In general, people dont want identical packets, but there's a lot of data duplication because of say layered encoding of video. the cost of routing is equal to the number of distinct packets on an edge. we call this data dependency aware network design.

%We call this the traffic-redundancy aware network design problem, or RAND for short. Formally, the server has a universe of data packets available, and each client desires a subset of the packets. The goal is to determine a collection of paths, one from the source to each client, such that the total cost of routing is minimized. Here the cost of routing on an edge is proportional to the total size of the {\em distinct} packets that the edge carries. For example, if the edge belongs to two paths that each carry the same packet, then the edge only needs to route the packet once and not twice.

% two extremes: if everyone wants the same packet, then steiner tree. if everyone wants different packets, then shortest paths. in general something intermediate. we want to connect via shortest paths but also do some amount of sharing.

Given the cost structure that this redundancy generates, it makes sense to try to route the demands of clients desiring similar sets along overlapping paths. An extreme example of the benefit of merging paths is when all clients desire the same set of packets. In this case, the problem becomes equivalent to finding the minimum-cost Steiner tree over the clients and the server. At the other extreme, if all the clients desire disjoint sets of packets, then merging does not help at all, and it is optimal to pick the shortest path from every client to the source. There is thus a trade-off between routing demands along shortest paths and trying to merge the paths of clients with similar demands. 

%In this respect, RAND is closely related to single-source buy-at-bulk network design~\cite{AA97, GKR03, MMP08} which has the same trade-off between merging paths and keeping them short. However, the cost structure of the two problems is quite different --- in BaBND the cost of an edge depends on the sum of demand sizes on all the paths containing the edge, whereas in RAND the cost depends on how the actual demand sets overlap. In particular, the optimal solution for RAND is not necessarily a tree whereas for BaBND it always is. Neither of the problems is a special case of the other.

% we consider two different network design problems. in the facility location variant, essentially the problem on a two-level directed graph. imagine for example, akamai deciding where to cache the data first. those are the facilities. services such as netflix often go through content distribution networks and then the clients each connect to the cdn individually. we give a constant factor approximation based on an lp relaxation. our rounding needs a clever heirarchical approach.

We also study a facility location version of the problem that we call traffic-redundancy aware facility location or RAFL. This problem is motivated by the prevalence of content distribution networks (CDNs) in the Internet. Netflix servers, instead of connecting to clients directly, cache their data at multiple servers spread around the network that are hosted by a CDN such as Akamai; each client then connects to a CDN server individually to obtain its data. The savings in this case comes from assigning clients with similar demand sets to the same CDN server and sending a single copy of the multiply desired packets to the server. Formally we are given a network over potential facilities and clients. Each client, as before, desires a subset of the available packets. Our goal is to assign each client to a facility and route to each facility the union of the demand sets desired by clients assigned to it. The cost of such a solution is the sum of the cost of routing packets from facilities to clients (that are proportional to the size of the respective client's demand set), and the cost of routing packets to facilities (that is proportional to the total size of the distinct packets being routed to the facility). The savings from redundancy in this case are realized in the facility opening costs where we assign multiple clients with similar demand sets to the same facility and pay for each of the common packets only once.

RAND and RAFL model information networks as opposed to commodity networks in traditional network design problems. They can therefore be considered as intermediate models between traditional network design and network coding. In the latter the information network is allowed to use coding to increase its capacity. In our context the network can eliminate redundant information but cannot otherwise alter the information. See \cite{nwcoding} and references therein for work on network coding.  

\paragraph{Our results and techniques.}

% we assume that the demand structure is laminar. such a structure arises, for example, when layered coding is used, and the packets for a lower transmission rate are just a subset of packets for the higher transmission rate.

We study the RAND and RAFL under a laminar demands assumption. In particular, we assume that the sets of packets demanded by clients form a laminar set family. In other words, every pair of demanded packet sets is either disjoint or one is a subset of the other. Such a structure arises, for example, in the Netflix problem described above when layered coding is used; the packets for a lower encoding rate are a subset of the packets for a higher encoding rate. 

% for the more general network design problem, note that a log n approximation can be obtained through tree embeddings. however, n can be large. we give a simple natural deterministic algorithm that achieves a log p approx where p is the number of distinct packets. here p is the number of "effectively distinct" packets. that is, if two or more packets are desired by the same set of terminals, then they can effectively be clubbed into a single packet. given then, in laminar instances, p is always at most n. moreover, in applications such as the netflix multicast problem described above, we expect n to be much larger than p.

RAND generalizes the minimum Steiner tree problem and RAFL generalizes metric uncapacitated facility location (MUFL); therefore both problems are NP-hard. We study approximations. 

We develop a constant factor approximation for the RAFL based on the natural LP-relaxation of the problem. Our algorithm follows the filtering approach developed by Lin and Vitter~\cite{LV92} and later exploited by Shmoys et al.~\cite{STA97} in the context of MUFL. In the MUFL setting, in the filtered solution, each client $t$ is associated with a set of facilities, say $\F(t)$, such that the routing cost of any of those facilities is comparable to the routing cost that the client pays in the LP solution. It is then sufficient to open a set of facilities in such a way that each can be charged to a client with a distinct set of associated facilities. Clients $t$ that are not charged (a.k.a. free riders) are rerouted to the facilities opened for other charged clients $t'$ such that $\F(t)$ and $\F(t')$ overlap, and furthermore the routing cost for $t'$ is smaller than that for $t$. Then, using the triangle inequality, the cost of rerouting can be bounded and this gives a constant factor approximation. In our setting, it is not sufficient to ensure that the routing cost of $t'$ is smaller than that of $t$. In addition, we must ensure that the facility opened by $t'$ can support the demand of $t$, otherwise every time we reroute a client we incur extra facility costs. Ensuring these two properties is tricky because clients with low routing costs may also have small demand sets; so essentially, these properties require us to consider clients according to two distinct and potentially conflicting orderings.

In order to deal with this issue, our algorithm is run in two phases. In the first phase we consider clients in order of increasing routing costs, in order to determine which clients will pay for their facilities and which ones are free riders. In the second phase, we consider free riders in decreasing order of the sizes of their demand sets. Each free rider is associated with a set of paying clients that pay for the facility that this client opens. Every time a free rider opens a facility, we reroute to it all of the other clients whose paying neighbors overlap with those of this facility. In this manner we can ensure that whenever a client is rerouted, it is routed to a facility that already produces the packets it needs. Unfortunately, our algorithm charges each paying client multiple times for different facilities opened. We ensure that the costs that a client pays each time it is charged form a geometrically decreasing sequence, and can therefore be bounded in terms of the LP solution. Overall we obtain a $27$-approximation.

For RAND an $O(\log n)$ randomized approximation can be obtained via tree embeddings~\cite{FRT03} because the problem is trivially solvable on trees; here $n$ is the number of nodes in the network. We give a simple deterministic combinatorial $O(\log P)$ approximation, where $P$ is the number of distinct packets to be routed. Note that if two or more packets are essentially identical in that they are desired by exactly the same set of clients, then we can combine them into a single packet (albeit with a larger size). Then, under the laminar demands assumption, $P$ is always at most the number of distinct demand sets or clients, which is at most $n$, the total number of nodes in the network. In fact in applications such as the Netflix multicast problem described above, we expect $n$ to be much larger than $P$.

% furthermore, our log p approximation is a natural combinatorial algorithm that can be implemented in time ... our algorithm does some minor preprocessing of the demand tree. it then goes over the demand tree in a top down fashion and at each node constructs a steiner tree over all terminals at that node. there is another natural way of interpreting our algorithm as a generalization of prim's algorithm for msts: at every step, we try to connect a terminal to its closest ancestor; a terminal is eligible to be considered if all of its ancestors are already connected, and we pick the eligible terminal with the smallest connection cost.

Furthermore, our $O(\log P)$ approximation algorithm is a natural combinatorial algorithm that is simple and fast to implement. It is convenient to represent the laminar family of demand sets in the form of a tree where the demand set at any node is a proper subset of that at its parent and disjoint from those at its siblings. Our algorithm begins with some minor preprocessing of the demand tree to ensure that the tree has small (logarithmic) height. It then traverses the demand tree in a top-down fashion, and at each node of the tree constructs a (approximately optimal) Steiner tree over all the terminals with the corresponding demand set connecting them to the source. We show that the cost of the Steiner trees constructed at every level of the demand tree is bounded by a constant times the cost of the optimal solution, and therefore obtain an overall approximation factor proportional to the height of the tree. The height of the tree can however be much larger than $\log P$ because packets have different sizes. In order to prove an approximation factor of $O(\log P)$ we need to do a more careful bounding of the total cost spent on long chains of degree $2$ in the demand tree. Using the Steiner tree algorithm of Robins and Zelikovsky~\cite{RZ00} as a subroutine we obtain a $(6.2\log P)$-approximation.

%There is another natural way of interpreting our algorithm as a generalization of Prim's algorithm for minimum spanning trees. At every step of the algorithm, we connect a terminal to its closest ancestor or peer (with respect to their demand sets) in the network. A terminal is eligible to be considered at a particular step if all of its ancestors have already been connected, and we always pick the eligible terminal with the smallest connection cost.

\paragraph{Related work.}
%As mentioned earlier, 
RAND is closely related to single-source uniform buy-at-bulk network design (BaBND)~\cite{AA97, GKR03, MMP08}  in that both problems involve a trade-off between picking short paths between the source and the clients and trying to merge different paths to avail of volume discounts. However the actual cost structure of the two problems is very different. In BaBND the cost on an edge is a concave function of the total load on the edge; In our setting the cost is a submodular function of the clients using that edge. Neither of the problems is a special case of the other. BaBND admits a constant approximation in the single-source version~\cite{GKR03, Talwar02} and an $O(\log n)$ approximation is known for the general multi-source multi-sink problem~\cite{AA97}. 

One way of thinking about RAND is to break-up the problem and solution packet-wise: each packet $p$ defines a subset of the terminals, say $T_p$, that desire that packet; the solution restricted to these terminals is essentially solving a Steiner tree problem over the set $T_p\cup\{s\}$. Our goal is to pick a single collection of paths such that the sum over packets of the costs of these Steiner trees is minimized. In this respect, the problem is related to variants of the Steiner tree problem where the set of terminals is not precisely known before hand. This includes the maybecast problem of Karger and Minkoff~\cite{KM00} for which a constant factor approximation is known, as well as the universal Steiner tree problem~\cite{JLN+05} for which a randomized logarithmic approximation can again be obtained through tree embeddings and this is the best possible~\cite{BCK10}.

Facility location has been extensively studied under various models. The models most closely related to RAFL are the service installation costs model of Shmoys et al.~\cite{SSL04} and the heirarchical costs model of Svitkina and Tardos~\cite{ST06}. In the former, each client has a production cost associated with it; the cost of opening a facility is equal to a fixed cost associated with the facility plus the production costs of all the clients assigned to that facility. One way of representing these costs is in the form of a two-level tree for each facility with the fixed cost for the facility at the root of the tree and the client-specific production costs at the next level nodes. The cost of assigning a set of clients to the facility is the total cost of the subtree formed by the unique paths connecting the client nodes to the root of the tree. Svitkina and Tardos generalize this cost model to a tree of arbitrary depth, although in their setting the trees for different facilities are identical. Both the works present constant factor approximations for the respective versions, based on a primal-dual approach and local search respectively. Our model is similar to these models in that our facility costs are also submodular in the set of clients connecting to a facility. In our setting, the costs can again be modeled by a tree in which each node is associated with one or more clients; the cost of a collection of clients is given by the total cost of the union of subtrees rooted at those clients (as opposed to the portion of the tree ``above'' the clients). Therefore, neither of the two settings generalize the other. Moreover, facility costs in our setting are different for different facilities, although they are related through a multiplier per facility. Finally, while in Shmoys et al. and Svitkina et al. routing costs are given merely by the metric over facilities and clients, in our setting they are given by the distances times the total demand routed. 

For modeling information flow, Hayrapetyan et al.~\cite{hayrapetyan2005network} have studied a single-source network design problem with monotone submodular costs on edges, and a group facility location problem. The network design setting generalizes ours in that edge costs can be arbitrary submodular functions of the clients using the edges; they note that an $O(\log n)$ approximation can be achieved via tree embeddings. In contrast, we obtain an $O(\log P)$ approximation, where $P$, the number of distinct packets in the system, is always at most $n$ and generally much smaller. In the group facility location problem, edge costs are identical to those in RAND, but neither of the problems subsumes the other: the former assumes there are multiple facilities (sources) with fixed opening costs, but also limits the number of distinct packets per client to at most one.

%For the network design problem they achieve an $O( \log n)$ approximation via tree embedding,  where $n$ is the number of nodes in the underlying graph. The approximation factor for RAND is $O(\log P)$, since the number of distinct packets in the system, $P$, is at most $n$ we observe that focusing the cost on distinct number of packets leads to a better approximation factor. In the group facility location problem, considered in ~\cite{hayrapetyan2005network}, savings (as in RAND) are obtained by taking advantage of redundancy in the routing cost. But the demand of each client is limited to a single packet and the facility opening costs are fixed and independent of the set of packets being produced at the facility. Hence neither RAND nor RAFL is a special case of the group facility location problem.
%bab

%network coding briefly
% closely related to buy at bulk. same idea fo volume discounts. however, there the discounts depend on sum of demand sizes. here we need to look at which demands we are combining. essentially this model is intermediate between commodity routing and network coding. we allow duplicate removal but no coding. 

%fac loc problem of zoya & eva

%maybecast problem of karger & minkoff

%universal steiner tree etc. because ours is a special case and can sort of be interpreted in that way

\section{Problem definition and preliminaries}

The traffic-redundancy aware network design (RAND) problem is defined as follows. We are given a graph $G=(V,E)$ with weights $c_e\in \Re^+$ on edges $e\in E$, and a special node $s$ called the source. In addition, we are given a set $T$ of clients or terminals located at different nodes in the graph. The source carries a set $\U$ of packets with $|\U|=P$. Each packet $p\in \U$ is associated with a weight $w_p$; we assume that the weights are integral. Each terminal $t\in T$ desires some subset of the packets; this is called the terminal's demand and is denoted by $\dem(t)$. We use the convention $\dem(s)=\U$. Also let $w(S)=\sum_{p\in S}w_p$ denote the total weight of a set $S$ of packets.

We assume that the collection of demand sets $\D=\{\dem(t)\}_{t\in T}$ forms a laminar family of sets. In particular, for any two terminals $t_1, t_2\in T$, $\dem(t_1)\cap\dem(t_2)\ne \emptyset$ implies that either $\dem(t_1)\subseteq \dem(t_2)$ or $\dem(t_2)\subseteq\dem(t_1)$. We use a tree $\DT$ to represent the containment relationship between sets in the laminar family. The nodes of $\DT$ are sets in $\D$. A demand set $X$ is a parent of another set $Y$ if $Y\subset X$ and there is no set $Z\in\D$ with $Y\subset Z\subset X$. At the root of the tree is the universe $\U$ of packets. For a demand set $X$ in the tree $\DT$, we use $T_X$ to denote the terminals $t\in T$ with $\dem(t)=X$.

%We use the demand tree $\DT$ to define a partial order over terminals. In particular, a terminal $t_1$ is an ancestor of a terminal $t_2$, written $t_1\succ t_2$, if $\dem(t_1)$ is an ancestor of $\dem(t_2)$ in $\DT$. If $\dem(t_1)=\dem(t_2)$, we write $t_1\sim t_2$.

We denote an instance of RAND by the tuple $(G,T,\D,\DT)$.

The solution to RAND is a collection of paths $\Paths=\{P_t\}_{t\in T}$, with $P_t$ connecting the terminal $t$ to the source $s$. Given this solution, an edge $e\in E$ carries the set $S_{\Paths}(e) = \cup_{t\in T: P_t\ni e} \dem(t)$ of packets. The load on the edge is $w_{\Paths}(e) = w(S_{\Paths}(e))$. We drop the subscript $\Paths$ when it is clear from the context. The cost of the solution $\Paths$ is $\cost(\Paths)=\sum_{e\in E} c_e w_{\Paths}(e)$. Our goal is to find a solution of minimum cost.

Let $\OPT=\argmin_{\text{feasible }\Paths} \cost(\Paths)$ be an optimal solution. For a subset $W$ of terminals, we use $\OPT(W)$ to denote the restriction of $\OPT$ to $W$, that is, the collection of paths $\{P_t\in\OPT\}_{t\in W}$.

%An alternate way of representing a solution to RAND is to specify the set of packets $S(e)$ that each edge carries. The resulting solution is feasible if for all terminals $t\in T$, the subgraph $(V,E_t)$ formed by including all edges $e$ for which $S(e)\supseteq \dem(t)$ contains a path from $t$ to $s$. Once again, the cost of the solution is $\cost(S)=\sum_{e\in E} \sum_{p\in S(e)} c_e w_p$.

In the traffic-redundancy aware facility location problem (RAFL), we are given a set $\F$ of facilities, and a graph over $T\cup \F$ with edge weights $c_e$. Let $c(u,v)$ denote the shortest path distance between nodes $u$ and $v$ in the graph under the metric $c$. Furthermore, each facility $f\in \F$ has a cost $\lambda_f$ associated with it. 

A solution to this problem is an assignment $A$ from terminals to facilities. The assignment specifies the set of packets that a facility $f$ needs to produce in order to serve all the terminals connected to it: $S_{A}(f) = \cup_{t\in T: A(t)=f} \dem(t)$.  The load on the facility is $w_{A}(f) = w(S_{A}(f))$. We drop the subscript $A$ when it is clear from the context. The cost of the solution $A$ is $\cost(A)=\sum_{f\in\F} \lambda_f w_{A}(f) + \sum_{t\in T} w(\dem(t)) c(t,A(t))$. The first component of the cost is called the facility opening cost $C_f(A)$, and the second the routing cost $C_r(A)$ of the solution. Once again our goal is to find a solution of minimum cost.

Both RAND and RAFL are NP-hard because they generalize Steiner tree and metric uncapacitated facility location respectively. Our goal is to find approximation algorithms.

\section{A constant factor approximation for RAFL}

In this section we present a constant factor approximation for the RAFL. For ease of exposition we assume that all packets have unit weight, and write $|S|$ for the total size or weight of a set $S$ of packets. This assumption is without loss of generality. We further assume without loss of generality that $\min_{f\in\F} \lambda_f=1$.

The following is a natural LP-relaxation of RAFL. Here $x_{t,f}$ is an indicator for whether terminal $t$ is assigned to facility $f$, and $y_{f,p}$ denotes the extent to which facility $f$ produces packet $p$.

\begin{align*}
\textrm{minimize} & \ \   \sum_{f \in \F} \sum_{p \in \U} \lambda_f y_{f,p} + \sum_{t \in T} \sum_{f\in \F} |\dem(t)| x_{t,f} c(t,f) \\ 
\textrm{subject to  } & \ \ \sum_{f \in \F } x_{t,f} \geq 1 \ \ \  \forall t \in T \\ 
& y_{f,p} \geq x_{t,f} \ \ \  \forall t, f, p \in \dem(t) 
\end{align*}

Our approach begins along the lines of the filtering approach developed by Lin and Vitter~\cite{LV92} and employs some of the rounding ideas of Shmoys et al.~\cite{STA97} developed for metric uncapacitated facility location. In particular, given an optimal solution to the LP, we preprocess the solution at a constant factor loss in performance such that each terminal is assigned to a non-zero extent only to facilities for which the terminal's routing cost is within a small constant factor of the corresponding average amount in the LP. At this point, each terminal can be assigned to any facility to which it is fractionally assigned by the filtered solution, at a low routing cost. The key part of the analysis is bounding the cost for producing packets at facilities. 

Let $\F(t)$ denote the set of facilities to which $t$ is assigned fractionally by the filtered solution. In Shmoys et al.'s setting, in order to bound the cost of opening facilities, it is sufficient to find a ``paying'' terminal for each open facility such that the sets $\F(t)$ for paying terminals are mutually disjoint. For each terminal $t$ that is not paying, there exists at least one representative paying terminal $t'$ such that $\F(t)$ and $\F(t')$ overlap; any such terminal $t$ is assigned to the facility opened by its representative terminal $t'$ at a slight increase in routing cost as long as the average routing cost of $t'$ is no more than that of $t$. In order to accomplish this, we process terminals in order of increasing average routing cost.

In our setting, this approach has a basic flaw. The terminal $t'$ may have a much smaller demand set compared to the terminal $t$. Then, if we assign $t$ to the facility opened by $t'$, the facility needs to produce many more packets and its new larger opening cost can no longer be charged to $t'$. Unfortunately, it is not possible to ensure that $t'$ has both a small average routing cost than $t$ as well as a larger demand set. Instead, we divide the process of opening facilities into two parts. First we determine which terminals are paying and which ones are not by processing facilities in order of increasing average routing cost. Then we decide which facilities to open by processing the non-paying terminals in the order of decreasing demand set sizes.

The algorithm is described formally below. We first introduce some notation. Let $(x^*, y^*)$ denote the optimal solution to the RAFL LP given above. Let $C_r^*(t) = \sum_{f\in \F} x^*_{t,f}c(t,f)$ and $C_f^*(t)=\sum_{f\in \F} x^*_{t,f}\lambda_f$ denote the average routing and facility opening costs respectively under the solution $(x^*, y^*)$ associated with a terminal $t$. The total routing and facility opening costs of the solution are given by $C_r^*=\sum_{t\in T} |\dem(t)| C_r^*(t)$ and $C_f^*=\sum_{f\in \F} \sum_{p\in \U} y^*_{f,p}\lambda_f$ respectively. Likewise, for a feasible solution $(x,y)$, we use $C_r^{(x,y)}(t)$ and $C_f^{(x,y)}(t)$ to denote the average routing and facility opening costs associated with a terminal $t$ respectively. We drop the superscript $(x,y)$ when it is clear from context. Also let $C_r(x,y)$ and $C_f(x,y)$ denote the total routing and facility opening costs of the solution $(x,y)$.

\begin{algorithm*}{Given: LP solution $(x^*,y^*)$; Return: Assignment $\FP$ from terminals to facilities}
\caption{Rounding algorithm for RAFL}
\label{Algorithm:Round}

\textbf{Phase 1: Filtering}
\begin{algorithmic}[1]
\FORALL{$t\in T$}
\STATE Let $C_r^*(t) = \sum_{f\in \F} x^*_{t,f}c(t,f)$ and $C_f^*(t)=\sum_{f\in \F} x^*_{t,f}\lambda_f$
\STATE For all $f\in\F$, if $c(t,f)>\alpha C_r^*(t)$ set $x_{t,f}= 0$ else $x_{t,f}= x^*_{t,f}$.
\STATE \label{step:renormalize} Renormalize $x_{t,f}$ so that $\sum_f x_{t,f}=1$.
\STATE Let $\F(t) = \{f: x_{t,f}>0\}$, $C_f(t) =\sum_{f\in \F} x_{t,f}\lambda_f $, and $C_r(t) =  \sum_{f\in \F} x_{t,f}c(t,f) $.
\ENDFOR

\STATE For all $f\in\F$ and $p\in \U$, set $y_{f,p} = \max_{t\in T: \dem(t)\ni p} x_{t,f}$.

\vspace{11pt} 
\textbf{Phase 2: Classification of terminals into paying and free}
\vspace{2pt} 
\STATE  Initialize paying and free terminal sets: $\TP= \emptyset$ and $\TF = \emptyset$.  
\STATE For all $t \in T$ initialize temporary assignment $\FT(t) = \emptyset$, permanent assignment $\FP(t) = \emptyset$, and cover $\Cover(t) = \emptyset$.
\STATE For all $f \in \F$ initialize paying terminal set $\Pay(f) = \emptyset$ and final paying set $\FPay(f)=\emptyset$.

\WHILE{ $T \setminus \left( \TP \cup \TF \right) \neq \emptyset $ } 

    \STATE Let $t = \argmin_{j\in T\setminus(\TP\cup\TF)} C_r^*(j)$. \COMMENT{Select terminal with least connection cost}
   
    \IF{ there exists $f \in \F(t)$  such that  $ \Pay(f)$ covers $t$}
        \STATE $\TF =\TF \cup \{ t\}$  \COMMENT{$t$ is a free terminal}  
        \STATE $\FT(t)= f$
        \STATE Assign covering set for $t$:  $\Cover(t) = \{ j \in \Pay (f) \ | \ \dem(j) \cap \dem(t) \neq \emptyset  \}$ 
        \ELSE
        \STATE \label{step:add_to_TP} $\TP =\TP\cup\{ t\}$ \COMMENT{$t$ is a paying terminal}
        \STATE For all $f \in \F(t) $ update $\Pay(f) = \Pay(f)\cup\{ t\}$               
   \ENDIF 
    
    \ENDWHILE

\vspace{11pt} 
\textbf{Phase 3: Opening facilities}
\vspace{2pt} 
     
    \STATE For all $t \in \TP$ assign level $\level(t) = \lceil \log_2 C_f(t) \rceil$;
 for all $t \in \TF$ assign $\level(t) = \min_{ j \in \Cover(t) } \level(j) $.   
    \STATE Initialize $\TF_d =\{ t \in \TF \mid \level(t) = d \} $ and for all $t \in \TF_d$ set $\Gamma_d(t) = \{ t' \in \TF_d \mid \Cover(t) \cap \Cover(t') \neq \emptyset \} $.

    \FOR{$d=0$ to $\max_{ j \in \TP} \level(j)$}
    \STATE Initialize $W=\TF_d$  \COMMENT{Repeat till all terminals in $\TF_d$ have been assigned a permanent facility}.
    \WHILE{$ W \neq \emptyset$ } 
    \STATE Let $t \in \argmax_{ j \in W } |\dem(j)| $ \COMMENT{Select any terminal with the largest demand set}.
    \STATE Let $\bar{t} \in \argmin_{j \in \Gamma_d(t)} C_r^*(j)$.
    \STATE Let facility $\phi(\bar{t}) \in \argmin_{ f \in \cup_{ j \in \Cover(\bar{t})} \F(j) } \lambda_f $.
    \STATE \label{step:open_fac} Let $\FP(t)= \phi(\bar{t})$\COMMENT{We say that $t$ opens the facility $\phi(\bar{t})$.}
    \STATE \label{step:open_fac_free} For all $t'\in\Gamma_d(t) $, assign $\FP(t')=\FP(t)$. Update $W=W\setminus \left( \Gamma_d(t) + \{ t \} \right)$.
    
    \STATE Assign final paying set for facility $\FP(t)$:  $\FPay( \FP(t) ) = \Cover(t)$. 
   \ENDWHILE
   \ENDFOR
   
   \FORALL{ $t \in \TP$ }
   \STATE \label{step:open_fac_pay} Let $\FP(t) \in \argmin_{ f \in \F(t) } \lambda_f$.
   \ENDFOR
   
\end{algorithmic}
\end{algorithm*}

Our algorithm proceeds in three stages. The first is a filtering stage in which we convert the solution $(x^*, y^*)$ into a fractional solution $(x,y)$ which satisfies the following property: for all $t, f$ with $x_{t,f}>0$, $c(t,f)\le\alpha C_r^*(t)$. Here $\alpha$ is a parameter that we fix later. For a terminal $t$, we use $\F(t)$ to denote all the facilities $f$ that are fractionally assigned to $t$ in $(x,y)$, that is, have $x_{t,f}>0$.

In the second stage of the algorithm, we classify terminals into paying terminals $\TP$ and free terminals $\TF$. Essentially, a terminal $t$ becomes a free terminal if any of the facilities in $\F(t)$ is already expected to produce a large fraction of $t$'s demand. We record this facility as $t$'s temporary assignment $\FT(t)$. To this end, we say that a set of terminals $W$ covers a terminal $t$ if $\left| \dem(t) \setminus \left( \cup_{t' \in W } \dem(t') \right) \right| < \frac{1}{2}|\dem(t)|$. If a terminal $t$ is not covered at any of the facilities in $\F(t)$, then it becomes a paying terminal and can potentially pay for any of the facilities in $\F(t)$. $\Pay(f)$ tracks the set of terminals paying for a facility $f$.

Finally, in the third stage of the algorithm, we pick a permanent assignment from terminals to facilities by considering facilities in decreasing order of the sizes of their demand sets. As a first cut approach, suppose that we assign a free terminal $t$ to the facility at which it is covered ($\FT(t)$), and pay for that facility using the paying terminals associated with it. To ensure that no paying terminal $t'$ ends up paying for two or more opened facilities, we consider all the free terminals that this paying terminal covers and assign those also to the first facility that the paying terminal pays for. The order in which we assign free terminals to facilities ensures that in this last step we do not increase the facility opening cost of the solution. Here is the catch: which facility is actually opened is decided by the free terminal $t$ that starts this process and may be one of the more expensive facilities in the paying terminal $t'$'s set $\F(t')$. In this case, the paying terminal does not have enough charge in the LP solution to pay for this facility. In order to avoid this situation, we consider all of the facilities that are ``close'' to $t$ or to other free terminals covered by the paying terminals that cover $t$. Of these we open the facility with minimum cost and pay for it using the paying terminals associated with $f$. 

While in our algorithm a paying terminal can end up paying for multiple opened facilities, in Lemma~\ref{lemma:fac_cost_one} below we argue that the costs of those facilities decrease geometrically and so the sum can be bounded.

We now formalize this argument. We begin by showing that the fractional solution $(x,y)$ is not too expensive.

\begin{lemma}
\label{lemma:filter_opt}
The solution $(x,y)$ is feasible for the RAFL LP. Moreover 
$C_f(x,y) \leq \frac{\alpha}{\alpha -1 } C_f^*$.
\end{lemma}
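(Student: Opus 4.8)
The plan is to verify feasibility directly from the construction in Phase~1, and then bound the facility cost by a standard averaging (Markov-type) filtering argument combined with the observation that renormalization inflates each assignment variable by a bounded factor.

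For feasibility, the assignment constraints $\sum_f x_{t,f}\ge 1$ hold immediately, since Step~\ref{step:renormalize} renormalizes so that $\sum_f x_{t,f}=1$ for every $t$. The only thing to check is that this renormalization is well defined, i.e.\ that the surviving fractional mass $M_t:=\sum_{f:\,c(t,f)\le\alpha C_r^*(t)} x^*_{t,f}$ is strictly positive. This follows from an averaging argument: assuming (as we may at optimality) that $\sum_f x^*_{t,f}=1$, the quantity $C_r^*(t)=\sum_f x^*_{t,f}c(t,f)$ is the average connection cost, so the mass placed on facilities with $c(t,f)>\alpha C_r^*(t)$ is at most $1/\alpha$, whence $M_t\ge 1-1/\alpha=(\alpha-1)/\alpha>0$ for any $\alpha>1$. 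The covering constraints $y_{f,p}\ge x_{t,f}$ for all $t,f$ and $p\in\dem(t)$ hold by the very definition $y_{f,p}=\max_{t:\,p\in\dem(t)} x_{t,f}$ set in the final step of Phase~1.

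For the cost bound, I would first record that renormalization only scales up surviving variables: for each $f\in\F(t)$ we have $x_{t,f}=x^*_{t,f}/M_t\le\frac{\alpha}{\alpha-1}x^*_{t,f}$ using $M_t\ge(\alpha-1)/\alpha$, while for discarded facilities $x_{t,f}=0\le\frac{\alpha}{\alpha-1}x^*_{t,f}$ trivially; hence $x_{t,f}\le\frac{\alpha}{\alpha-1}x^*_{t,f}$ for all pairs $(t,f)$. Next I would invoke that we may take the optimal LP solution to satisfy $y^*_{f,p}=\max_{t:\,p\in\dem(t)} x^*_{t,f}$: the objective is increasing in $y$ and the only constraints on $y_{f,p}$ are the lower bounds $y_{f,p}\ge x^*_{t,f}$, so an optimal solution makes each $y^*_{f,p}$ as small as feasibility allows. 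Combining these, for every facility $f$ and packet $p$,
\[
y_{f,p}=\max_{t:\,p\in\dem(t)} x_{t,f}\le\frac{\alpha}{\alpha-1}\max_{t:\,p\in\dem(t)} x^*_{t,f}=\frac{\alpha}{\alpha-1}\,y^*_{f,p},
\]
where the inequality uses that scaling by the fixed constant $\frac{\alpha}{\alpha-1}$ commutes with the maximum.

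Finally I would sum over facilities and packets with the weights $\lambda_f$:
\[
C_f(x,y)=\sum_{f\in\F}\sum_{p\in\U}\lambda_f\,y_{f,p}\le\frac{\alpha}{\alpha-1}\sum_{f\in\F}\sum_{p\in\U}\lambda_f\,y^*_{f,p}=\frac{\alpha}{\alpha-1}\,C_f^*,
\]
as claimed. There is no substantive obstacle here beyond bookkeeping; the two points requiring a little care are justifying the pointwise inequality on the maxima in the penultimate display (which rests on the uniform domination $x_{t,f}\le\frac{\alpha}{\alpha-1}x^*_{t,f}$ and hence commutes with $\max$), and the harmless assumption that the optimal $y^*$ is taken to be the tight pointwise maximum, so that the comparison is made against the correct quantity in $C_f^*$.
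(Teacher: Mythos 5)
Your proof is correct and follows essentially the same route as the paper: the Markov/averaging bound showing the surviving mass is at least $(\alpha-1)/\alpha$, the resulting pointwise inequality $x_{t,f}\le\frac{\alpha}{\alpha-1}x^*_{t,f}$, and passing this through the max defining $y_{f,p}$. The only (harmless) extra step is your assumption that $y^*$ is the tight pointwise maximum, which is not needed --- feasibility of $(x^*,y^*)$ alone already gives $y^*_{f,p}\ge\max_{t:\,p\in\dem(t)}x^*_{t,f}$, which is all the comparison requires.
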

\begin{proof}
  $(x,y)$ is feasible by construction. Note that for all $t\in T$, $\sum_f x^*_{f,t}=1$, otherwise the cost of the solution can be improved. Therefore, by Markov's inequality, $\sum_{f: c(f,t)>\alpha C_r^*} x^*_{f,t} \le 1/\alpha$. Then, in the renormalization step (Step~\ref{step:renormalize}) we set $x_{t,f}$ to be no more than $\alpha/(\alpha-1) x^*_{t,f}$. 
   
For all $f \in F$ and $p \in \U$ we set $y_{f,p}$ to be  $\max_{t\in T: \dem(t)\ni p} x_{t,f}$. Hence $y_{f,p}$ is no more than $ \max_{t\in T: \dem(t)\ni p} \frac{\alpha} {\alpha - 1} x^*_{t,f}$, which is no more than $\alpha/(\alpha-1) y^*_{f,p}$. This in turn implies the second part of the claim. 
\end{proof}

Next we bound the routing cost of the assignment $\FP$. 

\begin{lemma}
\label{lemma:routing}
For all $ i \in T$, $c(i,\FP(i))\le 9 \alpha C_r^*(i)$. 
\end{lemma}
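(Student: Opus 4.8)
The plan is to split on whether $i$ is a paying or a free terminal. If $i\in\TP$, the final loop sets $\FP(i)\in\argmin_{f\in\F(i)}\lambda_f$, so $\FP(i)\in\F(i)$, and the filtering property established in Phase~1 (namely $x_{t,f}>0\Rightarrow c(t,f)\le\alpha C_r^*(t)$) immediately yields $c(i,\FP(i))\le\alpha C_r^*(i)$, comfortably within the claimed bound. All the work is therefore in the free case $i\in\TF$, and the engine of the argument is a walk in the metric $c$ from $i$ to $\FP(i)$ whose every hop is controlled by two facts: (i) the filtering property above, and (ii) the ordering property that any paying terminal $j$ lying in the cover of a free terminal $h$ satisfies $C_r^*(j)\le C_r^*(h)$, since $j$ was added to $\Pay(\FT(h))$ before $h$ was processed in Phase~2 and Phase~2 scans terminals in increasing order of $C_r^*$.

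Next I would unwind how $\FP(i)$ is set for a free $i$. Terminal $i$ receives its permanent facility when some max-demand terminal $t$ with $i\in\Gamma_d(t)$ is selected in the inner loop at level $d=\level(i)$; the loop sets $\FP(i)=\phi(\bar{t})$, where $\bar{t}=\argmin_{j\in\Gamma_d(t)}C_r^*(j)$ and $\phi(\bar{t})\in\argmin_{f\in\cup_{j\in\Cover(\bar{t})}\F(j)}\lambda_f$. Writing $g=\FP(i)$, this means $g\in\F(c)$ for some $c\in\Cover(\bar{t})$. I would then fix three intermediary paying terminals: $a\in\Cover(i)\cap\Cover(t)$ and $b\in\Cover(\bar{t})\cap\Cover(t)$ (both exist because $i,\bar{t}\in\Gamma_d(t)$), together with $c$. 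Abbreviating the temporary facilities $f_i=\FT(i)$, $f_t=\FT(t)$, $f_{\bar{t}}=\FT(\bar{t})$, I record the membership facts $f_i\in\F(i)\cap\F(a)$, $f_t\in\F(a)\cap\F(b)$ (as $a,b\in\Cover(t)\subseteq\Pay(f_t)$), $f_{\bar{t}}\in\F(\bar{t})\cap\F(b)\cap\F(c)$, and $g\in\F(c)$.

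The core step is to bound $c(i,\bar{t})$ and $c(\bar{t},g)$ separately by the triangle inequality along explicit chains and then add them. For $c(i,\bar{t})$ I would route $i\to f_i\to a\to f_t\to b\to f_{\bar{t}}\to\bar{t}$: the ordering property gives $C_r^*(a)\le C_r^*(i)$ (from $a\in\Cover(i)$) and $C_r^*(b)\le C_r^*(\bar{t})\le C_r^*(i)$ (from $b\in\Cover(\bar{t})$ and $\bar{t}$ minimizing $C_r^*$ over $\Gamma_d(t)\ni i$), so each of the six hops costs at most $\alpha C_r^*(i)$ by filtering, giving $c(i,\bar{t})\le 6\alpha C_r^*(i)$. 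For $c(\bar{t},g)$ I would route $\bar{t}\to f_{\bar{t}}\to c\to g$, using $C_r^*(c)\le C_r^*(\bar{t})\le C_r^*(i)$, for three hops totaling $3\alpha C_r^*(i)$. Summing yields $c(i,g)\le 9\alpha C_r^*(i)$, as claimed.

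The main obstacle, and the reason the chain is routed as it is, is that the selected terminal $t$ has the largest demand in its batch but an entirely uncontrolled routing cost $C_r^*(t)$; I must never invoke the filtering bound at $t$ itself. The detour through the common paying terminals $a$ and $b$—which are charged against $C_r^*(i)$ rather than $C_r^*(t)$—and through $\bar{t}$, the minimum-routing-cost representative of $\Gamma_d(t)$, is exactly what sidesteps $C_r^*(t)$. A secondary point I would verify is that the relevant covering sets are nonempty so that $a,b,c$ exist: this follows from the definition of ``covers'', since a set $\Pay(f)$ that covers a terminal must contain some terminal whose demand meets that terminal's demand, and such a terminal lies in the corresponding cover.
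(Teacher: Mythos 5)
Your proof is correct and follows essentially the same route as the paper's: split into paying and free terminals, then for a free terminal bound $c(i,\FP(i))$ by a triangle-inequality walk through the temporary facilities and common paying terminals (your $a,b,c$ are the paper's $j, j'', j'$), using the Phase-2 ordering $C_r^*(j)\le C_r^*(h)$ for $j\in\Cover(h)$ and the minimality of $C_r^*(\bar{t})$ over $\Gamma_d(t)$. The hop-by-hop accounting matches the paper's $3\alpha+3\alpha+3\alpha$ grouping exactly.
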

\begin{proof}
  Let $i$ be a paying terminal that is assigned a facility $f$ in Step~\ref{step:open_fac_pay} of the algorithm. Then, $f\in\F(i)$ and therefore, by the definition of $x$ and $\F(i)$, $c(i,f)\le \alpha C_r^*(i)$. 
  
Now, let $i$ be a free terminal which is assigned a facility $f$ in Step~\ref{step:open_fac} or Step~\ref{step:open_fac_free} of the algorithm. Say level $\level(i) = d$ and let $t$ be the terminal that opened $f$. Then there are two possibilities: either $i=t$ or $i\in\Gamma_d(t)$. In the former case, $\FT(t) \in \F(i)$ and so $c(i, \FT(t)) \leq \alpha C_r^*(i)$.

Next we show that if terminal $t$ opens a facility (in Step \ref{step:open_fac}) then for all $t' \in \Gamma_d(t)$ we have $c(t', \FT(t)) \leq 3 \alpha C_r^*(t')$. Note that $\Cover(t') \cap \Cover(t) \neq \emptyset$ and let $j$ be a paying terminal in $\Cover(t') \cap \Cover(t)$. We have $\FT(t) \in \F(j)$ and  $\FT(t') \in \F(j)$. Since terminal $j$ was selected in the second phase before $t'$ we have $C^*_r(j) \leq C^*_r(t')$. By triangle inequality we get that $c(t', \FT(t)) \leq c(t', \FT(t') ) + c (\FT(t'), j) + c(j , \FT(t) ) \leq \alpha C_r^*(t') + 2 \alpha C_r^*(j) \leq 3 \alpha C_r^*(t')$.

In Step~\ref{step:open_fac} the opened facility, say $f$, is selected to be $\phi( \bar{t})$ for the terminal $\bar{t} \in \Gamma_d(t)$ with minimum $C_r^*$ value. Since $ f \in \F(j')$ for some $j' \in \Cover(\bar{t})$ we have $ c(f, \bar{t}) \leq 3 \alpha C_r^*(\bar{t}) $. Also $\bar{t}$ is contained in $\Gamma_d(t)$, which implies that $c(\bar{t}, \FT(t) ) \leq 3 \alpha C_r^*(\bar{t})$.  

Again using triangle inequality we get the desired result: $c(i,f) \leq c(i , \FT(t) ) + c(\FT(t), \bar{t}) + c(\bar{t}, f) \leq 3 \alpha C_r^*(i) + 6 \alpha C_r^*(\bar{t}) \leq 9 \alpha C_r^*(i)$.  Here the last inequality follows from the definition of $\bar{t}$. 
\end{proof}

Finally we account for the facility opening cost of the solution. Recall that $S_{\FP}(f)$ denotes the set of packets produced at $f$ under the assignment $\FP$. We note that a facility $f$ may be ``opened'' multiple times by different free terminals in Step~\ref{step:open_fac} of the algorithm. In this case, we treat each subsequent opening as opening a new copy of $f$ and designate a distinct set of terminals, $\FPay(f)$, to pay for all of the packets to be produced at the new copy freshly (even though some of them may already be assigned to the facility). 

Henceforth, for ease of exposition we will assume that each facility is opened at most once.

The following lemma notes that $|S_{\FP}(f)|$ can be bounded in terms of the demand sets of the terminals finally paying for this facility.
\begin{lemma}
\label{lemma:saf}
  Let $f$ be a facility opened by a free terminal $t$. Then $S_{\FP}(f) = \dem(t)\cup \cup_{j\in\Cover(t)} \dem(j)$. Furthermore, $|S_{\FP}(f)|\le 2|\cup_{j\in\Cover(t)} \dem(j)|$.
\end{lemma}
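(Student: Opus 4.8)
The plan is to first pin down exactly which terminals' demands make up $S_{\FP}(f)$ and then to collapse that set down to $\dem(t)$ together with the covering set $\bigcup_{j\in\Cover(t)}\dem(j)$. Write $d=\level(t)$. Under the convention (stated just before the lemma) that each facility is opened once, the terminals charged to $f$ are the opener $t$ (Step~\ref{step:open_fac}), the free terminals $t'\in\Gamma_d(t)$ that are dragged along in Step~\ref{step:open_fac_free}, and the designated paying set $\FPay(f)=\Cover(t)$. Hence
\[ S_{\FP}(f)=\dem(t)\ \cup\ \bigcup_{t'\in\Gamma_d(t)}\dem(t')\ \cup\ \bigcup_{j\in\Cover(t)}\dem(j), \]
so the first claim reduces to showing that the middle union is absorbed, i.e. $\dem(t')\subseteq \dem(t)\cup\bigcup_{j\in\Cover(t)}\dem(j)$ for every $t'\in\Gamma_d(t)$.

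This absorption step is the heart of the proof and is exactly where laminarity is used. Fix $t'\in\Gamma_d(t)$. Since $t$ is selected as a terminal of largest demand still to be processed, $|\dem(t')|\le|\dem(t)|$, so by laminarity either $\dem(t')\subseteq\dem(t)$ — in which case we are done — or $\dem(t')\cap\dem(t)=\emptyset$. In the disjoint case I would invoke the defining property of $\Gamma_d$: there is a common paying terminal $j^*\in\Cover(t)\cap\Cover(t')$, whose demand meets both $\dem(t)$ and $\dem(t')$. Laminarity then forces $\dem(t')\subseteq\dem(j^*)$: if instead $\dem(j^*)\subseteq\dem(t')$, then $\dem(j^*)$ would be disjoint from $\dem(t)$, contradicting $j^*\in\Cover(t)$. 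Since $j^*\in\Cover(t)$, this gives $\dem(t')\subseteq\bigcup_{j\in\Cover(t)}\dem(j)$, completing the first claim.

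For the ``furthermore'' I would use that $t$ is a free terminal, so $\Pay(\FT(t))$ covers $t$, that is $\big|\dem(t)\setminus\bigcup_{t'\in\Pay(\FT(t))}\dem(t')\big|<\tfrac12|\dem(t)|$. Only paying terminals whose demand meets $\dem(t)$ — precisely the members of $\Cover(t)$ — affect the intersection with $\dem(t)$, so this is the same as $|\dem(t)\setminus C|<\tfrac12|\dem(t)|$, where $C=\bigcup_{j\in\Cover(t)}\dem(j)$. Using the first claim, $|S_{\FP}(f)|=|C|+|\dem(t)\setminus C|$. The covering bound gives $|\dem(t)\setminus C|<\tfrac12|\dem(t)|$, while $|C|\ge|\dem(t)\cap C|=|\dem(t)|-|\dem(t)\setminus C|>\tfrac12|\dem(t)|$; chaining these two inequalities yields $|\dem(t)\setminus C|<|C|$ and therefore $|S_{\FP}(f)|<2|C|$, as desired.

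The main obstacle is the disjoint case of the absorption step: it is precisely there that the laminar structure does the work, converting a terminal $t'$ whose demand is disjoint from the opener's into one whose demand sits inside a single covering set $\dem(j^*)$. A secondary point to check is that Step~\ref{step:open_fac_free} only ever (re)assigns still-unprocessed members of $\Gamma_d(t)$, so that the inequality $|\dem(t')|\le|\dem(t)|$ is legitimate for every $t'$ contributing to $S_{\FP}(f)$.
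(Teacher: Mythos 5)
Your proof is correct and follows essentially the same route as the paper's: reduce the claim to showing each $\dem(t')$ for $t'\in\Gamma_d(t)$ is absorbed into $\dem(t)\cup\bigcup_{j\in\Cover(t)}\dem(j)$, using laminarity together with the processing order ($|\dem(t')|\le|\dem(t)|$) and a common covering terminal $j^*\in\Cover(t)\cap\Cover(t')$; you merely organize the case split in the opposite order. Your explicit derivation of the ``furthermore'' part fills in a step the paper dismisses as immediate from the definition of covering.
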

\begin{proof}
  Let $t\in\TF_d$ and consider a terminal $t' \in \Gamma_d(t)$. We claim that $\dem(t')\subseteq \dem(t)\cup \cup_{j\in\Cover(t)} \dem(j)$, and this implies the first part of the lemma. Let $j\in\Cover(t)\cap\Cover(t')$. Then, by the definition of a covering set, $\dem(t')\cap\dem(j)\ne\emptyset$. By laminarity, either $\dem(j)\supset\dem(t')$ in which case our claim holds, or $\dem(j)\subset\dem(t')$. In the latter case, $\dem(t)\cap\dem(j)\ne\emptyset$ implies $\dem(t)\cap\dem(t')\ne\emptyset$. Once again by laminarity, either $\dem(t')\subseteq\dem(t)$, or $\dem(t')\supset\dem(t)$. The latter case cannot hold because $t$ is considered before $t'$ in phase $3$ and therefore $|\dem(t')|\le|\dem(t)|$. In the former case our claim holds.

The second part of the lemma follows from the definition of covering.
\end{proof}

Using this lemma we can bound the facility opening cost of the assignment $\FP$ in terms of the average facility opening costs $C_f(t)$ for the paying terminals $t\in \TP$ in the fractional solution $(x,y)$. 
\begin{lemma}
\label{lemma:fac_cost_one}
  $\sum_f |S_{\FP}(f)|\lambda_f \le 9 \sum_{t\in\TP} |\dem(t)| C_f(t)$.
\end{lemma}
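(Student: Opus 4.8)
The plan is to charge the opening cost of every facility copy to the paying terminals in its final paying set $\FPay(f)$, and to show that each paying terminal $j\in\TP$ absorbs a total charge of at most $8\,|\dem(j)|\,C_f(j)\le 9\,|\dem(j)|\,C_f(j)$. Fix a copy $f$ opened in Step~\ref{step:open_fac} by a free terminal $t$ at level $d_f=\level(t)$, so that $\FPay(f)=\Cover(t)$. Lemma~\ref{lemma:saf} gives $|S_{\FP}(f)|\le 2\,|\bigcup_{j\in\Cover(t)}\dem(j)|\le 2\sum_{j\in\Cover(t)}|\dem(j)|$, so it remains to control $\lambda_f$ in terms of the level $d_f$ and, for each $j$, to bound how many copies it helps pay for and at what cost.

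First I would prove the per-level cost bound $\lambda_f\le 2^{d_f}$. The opened facility is $\phi(\bar t)$ for some $\bar t\in\Gamma_{d_f}(t)\subseteq\TF_{d_f}$, so $\level(\bar t)=d_f=\min_{j\in\Cover(\bar t)}\level(j)$; hence there is a paying terminal $j^\ast\in\Cover(\bar t)$ with $\level(j^\ast)=d_f$, i.e.\ $C_f(j^\ast)\le 2^{d_f}$. Since $\phi(\bar t)$ is the cheapest facility in $\bigcup_{j\in\Cover(\bar t)}\F(j)\supseteq\F(j^\ast)$, and $C_f(j^\ast)=\sum_{f'}x_{j^\ast,f'}\lambda_{f'}$ is a convex combination of $\{\lambda_{f'}\}_{f'\in\F(j^\ast)}$, we get $\lambda_f=\lambda_{\phi(\bar t)}\le\min_{f'\in\F(j^\ast)}\lambda_{f'}\le C_f(j^\ast)\le 2^{d_f}$. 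Combining with the cardinality bound, every copy satisfies $|S_{\FP}(f)|\lambda_f\le 2^{\,d_f+1}\sum_{j\in\FPay(f)}|\dem(j)|$.

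The heart of the argument is two structural facts about how often a fixed $j\in\TP$ appears in some $\FPay(f)$. First, for a single level $d$, the sets $\FPay(f)=\Cover(t)$ over the copies $f$ opened at that level are \emph{pairwise disjoint}: when a copy is opened by $t$, Step~\ref{step:open_fac_free} removes all of $\Gamma_d(t)\cup\{t\}$ from $W$, so any later opener $t''$ at level $d$ lies outside $\Gamma_d(t)$ and hence has $\Cover(t'')\cap\Cover(t)=\emptyset$; thus $j$ pays for at most one copy per level. Second, if $j\in\FPay(f)=\Cover(t)$ for a copy opened at level $d$, then $d=\min_{j'\in\Cover(t)}\level(j')\le\level(j)$, so $j$ is only ever charged at levels $d\le\level(j)$.

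Putting these together, I would sum $|S_{\FP}(f)|\lambda_f$ over all opened copies and regroup by paying terminal. For each $j$ the levels at which it is charged are distinct and lie in $\{0,1,\dots,\level(j)\}$, so $\sum_{f:\,j\in\FPay(f)}2^{\,d_f}\le\sum_{d=0}^{\level(j)}2^d<2^{\,\level(j)+1}\le 4\,C_f(j)$, where the last step uses $\level(j)=\lceil\log_2 C_f(j)\rceil$ and hence $2^{\level(j)}\le 2\,C_f(j)$. This yields $\sum_f|S_{\FP}(f)|\lambda_f\le\sum_{j\in\TP}2\,|\dem(j)|\cdot 4\,C_f(j)=8\sum_{j\in\TP}|\dem(j)|\,C_f(j)$, within the claimed bound. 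I expect the disjointness claim to be the main obstacle: it is precisely the property that prevents a paying terminal from being charged twice at the same scale and is what forces the per-terminal charge into a convergent geometric series; the remaining steps are bookkeeping around the definition of $\level$ and the cardinality bound from Lemma~\ref{lemma:saf}.
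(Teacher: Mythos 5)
Your argument for the facilities opened by free terminals is correct and follows the paper's proof almost exactly: the same charging to $\FPay(f)=\Cover(t)$ via Lemma~\ref{lemma:saf}, the same ``at most one facility per level, and only at levels $d\le\level(j)$'' structure, and the same geometric sum giving $\sum_{f:\FPay(f)\ni j}\lambda_f\le 4C_f(j)$. In fact your derivation of $\lambda_f\le 2^{d_f}$ (via the witness $j^\ast\in\Cover(\bar t)$ with $\level(j^\ast)=d_f$ and the convex-combination bound $\min_{f'\in\F(j^\ast)}\lambda_{f'}\le C_f(j^\ast)$) spells out a step the paper only asserts, and your disjointness phrasing is the contrapositive of the paper's contradiction argument; these are cosmetic differences.

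There is, however, a genuine gap: you only account for facilities opened by free terminals in Step~\ref{step:open_fac}, but the sum $\sum_f|S_{\FP}(f)|\lambda_f$ also includes the facilities that paying terminals open for themselves in Step~\ref{step:open_fac_pay}. Every $t\in\TP$ is assigned $\FP(t)\in\argmin_{f\in\F(t)}\lambda_f$, and that facility must produce all of $\dem(t)$, contributing $\lambda_{\FP(t)}\,|\dem(t)|$ to the left-hand side; your charging scheme never touches these terms. This is exactly where the constant $9$ (rather than your $8$) comes from in the paper: since $\FP(t)$ is the cheapest facility in $\F(t)$ and $C_f(t)=\sum_f x_{t,f}\lambda_f$ is a convex combination over $\F(t)$, one has $\lambda_{\FP(t)}\le C_f(t)$, so these facilities contribute at most an additional $\sum_{t\in\TP}|\dem(t)|\,C_f(t)$, and $8+1=9$. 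With that one extra paragraph your proof is complete.
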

\begin{proof}
First we bound the opening cost of facilities that were opened by free terminals. Lemma~\ref{lemma:saf} implies that $|S_{\FP}(f)|\le 2|\cup_{j\in\FPay(f)} \dem(j)|\le 2\sum_{j\in\FPay(f)}|\dem(j)|$. For facility $f$ write $\level(f) =d$ iff the terminal opening it has level $d$. Note that for facility $f$ with $\level(f) = d$ we have $\lambda_f \leq 2^d$. 
  
Fix a terminal $i \in \TP$, and write $\Cover^{-1}(i) = \{  j \in \TF \mid i \in \Cover(j) \}$. Say $\level(i) = \ell$. Then $C_f(i) \in (2^{\ell-1}, 2^{\ell} ]$. By definition, for all $j \in \Cover^{-1}(i)$ we have $\level(j) \leq \ell$. 
 
We claim that $i$ pays for at most one facility at level $d$ for $d\le \ell$, and does not pay for any facilities at level $d>\ell$.  We prove the first part by contradiction. Say there exits $f \neq f'$ such that $\level(f) = \level(f')=d$ and $i \in \FPay(f) \cap \FPay(f')$. Write $t$ as the terminal that opens $f$ and $t'$ as the terminal that opens $f'$. Both $t$ and $t'$ are in $\Cover^{-1}(i)$ and have level equal to $d$. Without loss of generality assume $t$ was processed before $t'$ by the algorithm. Then $t' \in \Gamma_d(t)$ and we would have $\FP(t') = \FP(t)$, contradicting the assumption that they are assigned to different facilities. 

For the second part of the claim, suppose that $i \in \FPay(f)$ for some facility $f$. Then $f$ is opened by a terminal $t \in \Cover^{-1}(i)$. As stated above, the level of such a terminal $t$ must be no more than the level of $i$, hence $\level(f) \leq \level(i)$.

For a level $d$ facility $f$ we have $\lambda_f \leq 2^d$ and hence $\sum_{f  : \FPay(f) \ni i } \lambda_f \leq \sum_{d=1}^\ell 2^d \leq 4 C_f(i) $. We therefore get the following chain of inequalities.
\begin{align*}
\sum_f \lambda_f |S_{\FP}(f)| & \le 2 \sum_f \lambda_f \sum_{j\in\FPay(f)}  |\dem(j)| \\
& = 2 \sum_{ j \in \TP } | \dem(j) | \sum_{f : \ \FPay(f) \ni j } \lambda_f  \\
& \le  8  \sum_{j\in \TP} C_f(j) \ |\dem(j)|
\end{align*}
Here the first inequality follows from Lemma \ref{lemma:saf} and the second inequality follows from the bound $\sum_{f  : \FPay(f) \ni i } \lambda_f \leq 4 C_f(i)$. 

To account for facilities that were opened by paying terminals in Step \ref{step:open_fac_pay} of the algorithm we note that for any such facility $h$, the set $S_{\FP} (h) = \dem(t)$ where $t$ is the paying terminal that opened $h$. Moreover $\lambda_{h} \leq \sum_f x_{t,f} \lambda_f $ and hence the opening cost incurred by the algorithm is no more than the fractional value, $\lambda_h | \dem(t) | \leq C_f(t) \ |\dem(t)|$. Hence the total facility opening cost incurred by the algorithm is no more than $9 \sum_{t\in\TP} |\dem(t)| C_f(t)$. 
\end{proof}

To complete the argument, we relate the costs $C_f(t)$ to the total cost $C_f(x,y)$.
\begin{lemma}
\label{lemma:fac_cost_two}
  $\sum_{t\in\TP} |\dem(t)| C_f(t)\le 2C_f(x,y)$.
\end{lemma}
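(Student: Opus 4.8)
The plan is to reduce the inequality to a per-facility statement and then exploit the laminar structure of the paying terminals at each facility. First I would rewrite both sides by expanding definitions and swapping the order of summation. Since $C_f(t)=\sum_{f}x_{t,f}\lambda_f$ and $|\dem(t)|=\sum_{p\in\dem(t)}1$, and since $t\in\Pay(f)$ exactly when $t\in\TP$ and $f\in\F(t)$, the left-hand side becomes $\sum_{f}\lambda_f\sum_{t\in\Pay(f)}x_{t,f}|\dem(t)|$. Similarly $C_f(x,y)=\sum_f\lambda_f\sum_p y_{f,p}$. Thus it suffices to prove, for each fixed facility $f$, the per-facility bound
\[
\sum_{t\in\Pay(f)}x_{t,f}\,|\dem(t)|\ \le\ 2\sum_{p\in\U} y_{f,p}.
\]
I note in advance that a term-by-term bound is hopeless: for a single packet $p$ the sum $\sum_{t\in\Pay(f):p\in\dem(t)}x_{t,f}$ can far exceed $2y_{f,p}$, so the budget of the larger demand sets must be used to pay for the smaller ones.

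Next I would analyze the structure of $\Pay(f)$. The demand sets $\{\dem(t):t\in\Pay(f)\}$ are laminar and, as I would first check, pairwise distinct: if two paying terminals at $f$ shared a demand set, the one processed later in Phase~2 would have been fully covered by the earlier one and declared free. Let $\mathcal{L}_f$ be the resulting laminar forest on $\Pay(f)$, where the parent of a node is the smallest strictly larger paying demand set. The crucial structural claim is that for every $t\in\Pay(f)$ the children's demand sets sum to less than half of $\dem(t)$; equivalently, writing $L_t=\dem(t)\setminus\bigcup_{c\ \text{child of}\ t}\dem(c)$ for the packets exclusive to $t$, one has $|\dem(t)|\le 2|L_t|$. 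I expect this to be the main obstacle. To prove it I would argue that any paying terminal whose demand is a proper subset of $\dem(t)$ must be processed before $t$ (otherwise $t\in\Pay(f)$ would already cover it and make it free); hence at the moment $t$ was examined, all paying subsets of $\dem(t)$ — in particular the disjoint children of $t$ in $\mathcal{L}_f$ — were already in $\Pay(f)$. Since $t$ was nonetheless classified paying, $\Pay(f)$ did not cover $t$, so this union has size strictly below $\tfrac12|\dem(t)|$, which gives the claim.

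Finally I would set up the charging. Define $z_p=\max_{t\in\Pay(f):\,p\in\dem(t)}x_{t,f}$, so that $z_p\le y_{f,p}$, and for each $t$ let $M_t=\max_{t'\in\{t\}\cup\anc(t)}x_{t',f}\ge x_{t,f}$, with ancestors taken in $\mathcal{L}_f$. For a packet $p\in L_t$ the paying terminals containing $p$ are precisely $t$ and its ancestors, so $z_p=M_t$; since the sets $L_t$ partition the packets appearing in $\Pay(f)$, this yields $\sum_p z_p=\sum_{t\in\Pay(f)} M_t|L_t|$. Combining with the structural claim,
\[
\sum_{t\in\Pay(f)}x_{t,f}|\dem(t)|\ \le\ \sum_{t\in\Pay(f)}M_t|\dem(t)|\ \le\ 2\sum_{t\in\Pay(f)}M_t|L_t|\ =\ 2\sum_p z_p\ \le\ 2\sum_p y_{f,p},
\]
which is the per-facility bound. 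Multiplying by $\lambda_f$ and summing over $f$ then recovers the lemma.
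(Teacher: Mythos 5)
Your proof is correct and takes essentially the same route as the paper's: both reduce to a per-facility inequality, assign to each paying terminal $t$ a set of packets exclusive to it of size at least $\tfrac12|\dem(t)|$ (your forest-based $L_t$ coincides with the paper's $L(t,f)$, the portion of $\dem(t)$ left uncovered by $\Pay(f)$ at the moment $t$ was classified as paying), and then lower-bound $\sum_p y_{f,p}$ using $y_{f,p}\ge x_{t,f}$ for $p\in\dem(t)$. Your write-up is merely more explicit about why these exclusive sets partition the support and why each is large, points the paper states tersely.
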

\begin{proof}
  When a terminal $t$ is added to $\TP$ (Step~\ref{step:add_to_TP} of the algorithm) it is not covered at any of the facilities in $\F(t)$. For $f\in\F$ let $L(t,f) = \dem(t)\setminus\cup_{t'\in\Pay(f)}\dem(t')$ denote the set of packets in $\dem(t)$ that is not covered at $f$ at the time that $t$ is considered. Here, $\Pay(f)$ denotes the set of terminals that is paying for $f$ at the time that $t$ is considered. Note that $|L(t,f)|\ge 1/2 |\dem(t)|$ for $f\in\F(t)$. For any facility $f$ the sets $L(t,f)$ are disjoint and partition the support of $y_{f,p}$ and hence the following chain of inequalities hold:

\begin{eqnarray}
\sum_{p} y_{f,p} & \geq & \sum_{t \in \TP}   \sum_{p \in L(t,f)} y_{f,p}   \nonumber \\ 
 & \geq & \sum_{t \in \TP}  \sum_{p \in L(t,f)} x_{t,f}  \nonumber \\
 & = & \sum_{t \in \TP} |L(t,f)| x_{t,f}  \label{eqn:Ltf} 
\end{eqnarray}

We can now derive the desired bound:
\begin{align*}
C_f(x,y) & =\sum_f \sum_p \lambda_f y_{f,p} \\
& =  \sum_f \lambda_f \sum_p y_{f,p} \\
& \geq  \sum_f \lambda_f \sum_{t \in \TP} |L(t,f)| \ x_{t,f} \\
%& = & \sum_f \sum_{ i \in \D_c} \lambda_f |A(i,f)| \ \bar{x}_{if} \\
& \geq  \sum_f \sum_{ t \in \TP} \frac{1}{2} \ |\dem(t)| \ \lambda_f x_{t,f} \\
& =  \sum_{t \in \TP} \frac{1}{2} \ |\dem(t)| \sum_f \lambda_f x_{t,f} \\
& =  \frac{1}{2} \sum_{t \in \TP}  \ |\dem(t)| C_f^{(x,y)}(t)
\end{align*}
Here the third step follows from inequality (\ref{eqn:Ltf}) and the fourth step holds because for any $t \in \TP$ and $f\in\F$, either $x_{t,f}=0$ or $f\in\F(t)$ and the size of the set $L(t,f)$ is at least half its demand: $|L(t,f)| \geq 1/2 |\dem(t)|$.
\end{proof}

Putting together the above lemmas we obtain the following theorem:
\begin{theorem}
  Algorithm~\ref{Algorithm:Round} gives a $27$-approximation to the RAFL.
\end{theorem}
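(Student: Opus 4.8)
The plan is to assemble the four preceding lemmas into a single bound on $\cost(\FP) = C_f(\FP) + C_r(\FP)$ against the LP optimum, and then pick $\alpha$ to balance the facility and routing contributions. Since the heavy lifting has already been done in the lemmas, the theorem amounts to a routine summation plus one optimization over $\alpha$.

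First I would dispense with the routing cost. By Lemma~\ref{lemma:routing}, every terminal $i$ satisfies $c(i,\FP(i)) \le 9\alpha\, C_r^*(i)$. Weighting by $|\dem(i)|$ and summing over all terminals gives $C_r(\FP) = \sum_{i\in T} |\dem(i)|\, c(i,\FP(i)) \le 9\alpha \sum_{i\in T} |\dem(i)|\, C_r^*(i) = 9\alpha\, C_r^*$, using the definition $C_r^* = \sum_{i} |\dem(i)| C_r^*(i)$.

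Next I would chain the facility-cost estimates. The algorithm's facility opening cost is $C_f(\FP) = \sum_f |S_{\FP}(f)|\,\lambda_f$, and Lemma~\ref{lemma:fac_cost_one} bounds this by $9\sum_{t\in\TP} |\dem(t)|\, C_f(t)$. Applying Lemma~\ref{lemma:fac_cost_two} gives $\sum_{t\in\TP} |\dem(t)|\, C_f(t) \le 2\,C_f(x,y)$, and finally Lemma~\ref{lemma:filter_opt} yields $C_f(x,y) \le \frac{\alpha}{\alpha-1} C_f^*$. Combining the three inequalities gives $C_f(\FP) \le 18\,C_f(x,y) \le 18\,\frac{\alpha}{\alpha-1}\, C_f^*$.

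Adding the two bounds produces $\cost(\FP) \le 9\alpha\, C_r^* + 18\,\frac{\alpha}{\alpha-1}\, C_f^*$. The last step is to choose $\alpha$ so as to minimize the worst of the two coefficients: setting $9\alpha = 18\,\frac{\alpha}{\alpha-1}$ forces $\alpha-1 = 2$, i.e.\ $\alpha = 3$, at which point both coefficients equal $27$. Thus $\cost(\FP) \le 27\,(C_r^* + C_f^*)$. Because the LP displayed above is a relaxation of RAFL, its optimum $C_r^* + C_f^*$ is a lower bound on $\OPT$, and we conclude $\cost(\FP) \le 27\,\OPT$. I do not anticipate any genuine obstacle in the theorem proper: the only decision is the value of $\alpha$, and all conceptual difficulty (in particular the two-phase opening scheme underlying Lemmas~\ref{lemma:saf} and~\ref{lemma:fac_cost_one}) has already been absorbed into the lemmas.
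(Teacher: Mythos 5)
Your proposal is correct and follows essentially the same route as the paper's proof: bound the routing cost by $9\alpha C_r^*$ via Lemma~\ref{lemma:routing}, chain Lemmas~\ref{lemma:fac_cost_one}, \ref{lemma:fac_cost_two}, and \ref{lemma:filter_opt} to bound the facility cost by $18\alpha/(\alpha-1)\,C_f^*$, and balance the two coefficients at $\alpha=3$ to get $27$.
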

\begin{proof}
Lemma \ref{lemma:routing} implies that for any terminal the routing cost is no more than $9 \alpha$ times the optimal. Also from Lemma \ref{lemma:fac_cost_one} and Lemma \ref{lemma:fac_cost_two} we get that the total facility opening cost is no more than $18$ times the total facility opening cost of the filtered solution $C_f(x,y)$. Finally from Lemma \ref{lemma:filter_opt} we have $C_f(x,y) \leq \alpha/( \alpha -1)  \ C_f^*$, so the facility opening cost of the generated solution is no more than $18 \alpha / ( \alpha -1) $ times the optimal. Hence the algorithm achieves an approximation factor of $\max \{ 9 \alpha , \frac{18 \alpha}{\alpha -1 } \}$, which is minimized at $\alpha=3$ to give us a $27$-approximation.   
\end{proof}

%new commands to be included
%%% \depth for depth of the demand tree 
% \ST Steiner tree at each level....see \ST(T_R) in algorithm 
% \cost cost of  solution \Paths ( can be introduced in prelim.tex)  also \cost of Steiner trees.
%\D
%\OPT
%\meta

\section{An $O(\log P)$ approximation for RAND}
In this section we develop an $O( \log P)$-approximation algorithm for the RAND where $P=|\U|$. The basic observation that our algorithm hinges on is that if every pair of demand sets in $\D$ is either identical or disjoint, that is, the tree $\DT$ is a two-level tree, then the problem becomes easy and can be approximated to within a small constant factor. In particular, then the problem becomes one of finding optimal Steiner trees connecting the terminals in $T_X$ to $s$ for every set $X\in \D$. The cost of these Steiner trees is purely additive because the different sets in $\D$ are disjoint.

In particular, this implies that for any collection $\X$ containing disjoint sets of packets, we can construct a partial solution over terminals in $\cup_{X\in\X} T_X$ at a cost of constant times the cost of the optimal solution. This immediately suggests an algorithm with approximation ratio a constant times the depth of the tree $\DT$: define the level of a node in $\DT$ as its distance from the root $\U$; for every level $k$, consider the collection $\X_k$ of sets at level $k$ and construct a constant factor approximation over terminals in $\cup_{X\in\X_k} T_X$. 

In order to obtain a small approximation ratio through this approach, our algorithm first performs a preprocessing of the demand tree $\DT$ to ensure, at small cost, that for any pair of demand sets in the tree where one is a parent of the other, the total weight of the parent is at least twice as large as the total weight of the child. Since the weight of every packet is integral, this implies that the depth of the preprocessed tree is at most $\log w(\U)$, and gives us an $O(\log w(\U))$ approximation.

To obtain an $O(\log P)$ approximation, we need to do a more clever analysis. As discussed in the introduction, $P$ denotes the number of effectively distinct packets in the instance. In particular, we can assume without loss of generality that $P$ is equal to the number of nodes in the tree $\DT$. Our next key observation is that we can collectively bound the total cost of Steiner trees for nodes in a long root to leaf chain of nodes by a constant times the cost of the optimal solution (rather than by the length of the chain times the optimal cost). Here we crucially use the fact that each subsequent node in the chain has a weight at most half that of the preceding node, a fact that is ensured through our preprocessing step. Given this, we break up the demand tree into $\log P$ collections of chains, each of which corresponds to disjoint packet sets. The total cost of Steiner trees over each collection of chains can then be bounded to within a constant factor of the optimal cost, and we get an overall $O(\log P)$ approximation.

%If in the preprocessed tree $\DT$, every node has at least two children, then the depth of the tree is at most $\log P$, and our previous argument implies an $O(\log P)$ approximation. However, this is not necessarily true, and the tree may contain long chains of degree $2$ nodes. Our next key observation is that we can collectively bound the total cost of Steiner trees for nodes in a long chain by a constant times the cost of the optimal solution (rather that by the length of the chain times the optimal cost). Here we crucially use the fact that each subsequent node in the chain has a weight at most half that of the preceding node, a fact that is ensured through our preprocessing step.

We now present our algorithm and analysis formally.

\paragraph{Preprocessing the tree $\DT$.}
Our preprocessing phase is described in Algorithm~\ref{Algorithm:Preproc} below. The preprocessing phase makes two changes to the given instance. First, it changes the demand sets of some terminals to supersets of their original demands. Second, after these changes to demand sets, if there are nodes in $\DT$ that do not have any terminals associated with them, it merges these nodes with their parents.

\begin{algorithm}{Given: Instance $(G,T,\D,\DT)$; Return: New instance $(G,T,\D',\DT')$}
\caption{Preprocessing algorithm}
\label{Algorithm:Preproc}
\begin{algorithmic}[1]
\STATE Perform depth first search over the tree $\DT$.
\FORALL{ nodes $X\in\D$ encountered during DFS}
\STATE Let $Y$ be the parent of $X$ in $\DT$
\IF{$w(X)>\frac12 w(Y)$}
\STATE For all $t\in T_X$, set $\dem'(t)=Y$.
\STATE Merge $X$ with $Y$. That is, set $T_Y=T_Y\cup T_X$, remove $X$ from $\DT$, and reattach the children of $X$ in $\DT$ as the children of $Y$ in the modified tree.
\ENDIF
\ENDFOR
\end{algorithmic}
\end{algorithm} 

We obtain the following lemma.
\begin{lemma}
\label{lem:preproc}
  Consider an instance $(G,T,\D,\DT)$ of RAND. Then Algorithm~\ref{Algorithm:Preproc} returns an instance $(G,T,\D',\DT')$ with the following properties:
  \begin{enumerate}
  \item For every $t\in T$, $\dem'(t)\supseteq \dem(t)$ and $w(\dem'(t))\le 2w(\dem(t))$.
  \item $\D'$ is a laminar family.
  \item For every pair of sets $X,Y\in\D'$ such that $X$ is a child of $Y$ in $\DT'$, $w(Y)\ge 2w(X)$.
 % \item The depth of $\DT'$ is at most $\log w(\U)$.
  \item The cost of the optimal solution over $(G,T,\D',\DT')$ is at most twice that of the optimal solution over $(G,T,\D,\DT)$.
  \item Any feasible solution to $(G,T,\D',\DT')$ is also feasible for $(G,T,\D,\DT)$.
  \end{enumerate}
\end{lemma}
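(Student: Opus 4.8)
The plan is to establish the five properties in roughly the order listed, since the later ones rest on structural facts from the earlier ones. The two substantive claims are property~1 and property~4; properties 2, 3, and 5 follow almost immediately from the merge rule. I expect the main obstacle to be property~1: the factor-two weight bound looks like it should accumulate over many merges, yet it must hold after arbitrarily many of them, and this rests delicately on the order in which the DFS processes nodes.

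For property~1 I would first make precise that the depth-first search visits each node only after all of its original ancestors (pre-order discovery). Consequently, when a node $X$ is examined, its current parent $Y$ in the partially-merged tree is the nearest surviving ancestor of $X$, and $Y$'s fate (merge or survive) was already decided and cannot change afterward. From this I would derive that every terminal is relocated at most once: if $t\in T_X$ and $X$ is merged, then $t$ moves to $Y$, but $Y$ survived its own earlier processing and is therefore never merged again, so $t$ stops at $Y$. Since a merge of $X$ into $Y$ occurs only when $w(X)>\tfrac12 w(Y)$, the single relocation yields $w(\dem'(t))=w(Y)<2w(X)=2w(\dem(t))$, and if $t$ is never relocated then $\dem'(t)=\dem(t)$. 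I would flag the pre-order convention as essential: under post-order a terminal could be dragged up an entire chain and its weight could grow by more than a factor of two.

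Properties 2, 3, and 5 are then short. Every set appearing in $\D'$ is a surviving node of $\DT$, so $\D'\subseteq\D$; a subfamily of a laminar family is laminar, giving property~2, and since merging only ever reattaches a node to an ancestor, $\DT'$ is exactly the tree on surviving nodes under the nearest-surviving-ancestor relation. For property~3, if $X$ survives with parent $Y$ in $\DT'$, then $Y$ was $X$'s current parent when $X$ was processed (by the same ``parent frozen once decided'' observation), and survival means the test $w(X)>\tfrac12 w(Y)$ failed, i.e. $w(Y)\ge 2w(X)$. Property~5 is immediate, since feasibility of a solution---a path from each terminal to $s$---depends only on $G$, $T$, and $s$, all of which are unchanged.

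The heart of property~4 is a set version of property~1: for any terminal set $S$, $w(\bigcup_{t\in S}\dem'(t))\le 2\,w(\bigcup_{t\in S}\dem(t))$. I would prove this by taking the maximal members $M'_1,\dots,M'_k$ of $\{\dem'(t):t\in S\}$, which are pairwise disjoint by laminarity and union to $\bigcup_{t\in S}\dem'(t)$; picking a representative $t_i$ with $\dem'(t_i)=M'_i$, the sets $\dem(t_i)\subseteq M'_i$ are then also pairwise disjoint, so $w(\bigcup_{t\in S}\dem'(t))=\sum_i w(M'_i)\le 2\sum_i w(\dem(t_i))=2\,w(\bigcup_i\dem(t_i))\le 2\,w(\bigcup_{t\in S}\dem(t))$, using property~1 on each $t_i$. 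Applying this with $S$ the set of terminals whose $\OPT$-path traverses a given edge $e$ shows that, reusing the paths of $\OPT$ unchanged, the load on every edge at most doubles; hence the total cost at most doubles, and since these paths remain feasible for the new instance, its optimum is at most twice the original. This set-union step is the second place where laminarity does real work, and after property~1 it is where I would be most careful.
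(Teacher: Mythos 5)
Your proof is correct and follows essentially the same route as the paper's: each terminal's demand is relocated at most once because the surviving parent is never reprocessed (giving property 1), properties 2, 3, and 5 are read off the merge rule, and property 4 is obtained by reusing the paths of $\OPT$ and bounding the per-edge load via maximal demand sets, which are pairwise disjoint by laminarity, so the factor of $2$ applies termwise. The only (cosmetic) difference is that you take the maximal sets with respect to the new demands $\dem'$ rather than the old ones as the paper does; this is if anything slightly cleaner, since it sidesteps the fact that the new demands of the old maximal terminals need not remain pairwise disjoint.
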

\begin{proof}
  We first remark that the demand of any terminal $t$ is modified at most once, when its original node $\dem(t)$ is encountered in the DFS over $\DT$; after this, the new node $\dem'(t)$ is never processed again by DFS. Then, it is easy to see that the first property holds. The second property holds because $\DT'$ is a tree over sets in $\D'$ and therefore any two sets in $\D'$ are either disjoint or one is an ancestor of another. The third property holds by construction: when the node $X$ is encountered by the DFS, if the node survives the preprocessing, then it holds that $w(X)\le \frac12w(Y)$.
The fifth property follows immediately from the first.

  To prove the fourth property, we note that for any two terminals $t_1$ and $t_2$, $\dem(t_1)\subset \dem(t_2)$ implies $\dem'(t_1)\subseteq\dem'(t_2)$. Now consider any optimal solution $\Paths=\{P_t\}_{t\in T}$ to $(G,T,\D,\DT)$. We will show that the cost of the solution $\Paths$ for the new instance $(G,T,\D',\DT')$ is no more than twice its cost for $(G,T,\D,\DT)$. For every edge $e\in E$ consider the set of terminals, say $T(e)$, that use $e$: $T(e)=\{t: P_t\ni e\}$. Let $T'(e)$ be the subset of $T(e)$ of terminals whose demand is not contained inside the demand of any other terminals in $T(e)$; that is $T'(e)$ denotes the terminals with the maximal demand sets. Then, in the instance $(G,T,\D,\DT)$, $w(S_{\Paths}(e)) = \sum_{t\in T'(e)} w(\dem(t))$. Moreover by our observation above, in the new instance $(G,T,\D',\DT')$, terminals in $T'(e)$ are still the maximal demand terminals and $w(S_{\Paths}(e)) = \sum_{t\in T'(e)} w(\dem'(t))$. The claim now follows from the first property.
\end{proof}

% The first step is to transform the demand tree $\DT$ such that it has a depth of at most $\log w(U)$.  We accomplish this task by going down $\DT$, depth wise, and potentially associating new demand sets with each terminal. Here a terminal $t'$ is said to be the parent of terminal  $t$ iff $\dem(t')$ is the parent of $\dem(t)$ in $\DT$. The transformation is as follows: For terminal $t$ if $w( \dem(t)) \geq 1/2 w(\dem(t'))$ we update $\dem(t)$ to be $\dem(t')$.  

% Note that the transformation reassigns demand sets hence the resulting set family is laminar as well. Moreover for any the terminal the weight of the new demand set associated with it is no more than twice the old one, hence if we go down the levels of the transformed tree the weight of sets drop down by at least a factor of two implying that the depth of the tree is no more than $\log w(U)$. Finally note that the cost of the optimal solution  for the transformed instance is no more than twice the optimal solution for the original instance.  Given $\Paths$, a solution for the original instance, we route terminals (with new demands) to the source along the same paths by updating the set of packet $S_{\Paths}(e)$ for each edge $e$. Since the weight of the new demand set, for any terminal, is no more than twice the original, the load on any edge does not go up beyond a factor of two. Overall this implies that the cost of the new solution is no more than twice that of the original. 

\paragraph{Main algorithm.}

We now proceed to describe the main algorithm. Henceforth we will assume that the given instance of RAND satisfies the properties listed in Lemma~\ref{lem:preproc}, in particular, property 3. 
%Our algorithm again performs a depth first search on the tree $\DT$. When it encounters a node $X\in\D$, it constructs an approximately optimal Steiner tree $\ST_X$ over nodes $T_X\cup \{s\}$ in the graph $G$. For $t\in T_X$, it sets $P_t$ to be the unique path in $\ST_X$ from $t$ to $s$.

\begin{algorithm}{Given: Instance $(G,T,\D,\DT)$ satisfying the properties in Lemma~\ref{lem:preproc}; Return: Collection of paths from each terminal to the source $s$.}
\caption{A logarithmic approximation for RAND}
\label{Algorithm:log-p}
\begin{algorithmic}[1]
\STATE Perform depth first search over the tree $\DT$.
\FORALL{ nodes $X\in\D$ encountered during DFS}
\STATE Construct an approximately optimal Steiner tree $\ST(X)$ in $G$ over $T_X\cup\{s\}$. \label{step:steiner-tree}
\STATE For every $t\in T_X$, return the unique path in $\ST(X)$ from $t$ to $s$.
\ENDFOR
\end{algorithmic}
\end{algorithm} 
 
We first define some notation. For a demand set $Y$, let $\ST^*(Y)$ denote the optimal Steiner tree over $T_Y\cup\{s\}$. The cost of this Steiner tree is $\cost(\ST^*(Y)) = w(Y)\sum_{e\in\ST^*(Y)} c_e$. Analogously we define the cost of an arbitrary Steiner tree $\ST(Y)$ over $T_Y\cup\{s\}$ as $\cost(\ST(Y))$.

In our analysis, we sometimes need to consider sets of nodes in $\DT$ and bound their cost collectively. To this end, we define a {\em chain} $\meta{Y}$ to be a set $\meta{Y} = \{Y_0, Y_1, \cdots, Y_k\}$ where for every $i<k$, $Y_i$ is a parent of $Y_{i+1}$. Recall that this implies $Y_0\supset Y_1\supset\cdots\supset Y_k$ and $w(Y_i)\ge 2w(Y_{i+1})$ for all $i<k$. We call the node $Y_0$ the start of the chain $\meta{Y}$. We say that two chains $\meta{Y_1}$ and $\meta{Y_2}$ are disjoint if $(\cup_{Y\in \meta{Y_1}}  Y)\cap(\cup_{Y\in \meta{Y_2}}  Y)=\emptyset$.

%combine chains of degree two nodes in $\DT$ into meta-nodes. Every set $X\in \D$ whose parent in $\DT$ has at least two children is associated with a meta-node denoted $\meta{X}$. The meta-node $\meta{X}$ is thought of as a set of nodes. If $X$ has at least two children, then $\meta{X}$ is the singleton set $\{X\}$. Otherwise, informally, $X$ is the ``start'' of a chain in $\DT$, and $\meta{X}$ contains all of the nodes in this chain: (1) $X\in \meta{X}$, (2) If $Y\in \meta{X}$ and $Y$ has a single child $Z$, then $Z\in\meta{X}$. Note that every node $Y\in\D$ is contained in some meta-node.

%Let $\meta{\D}$ denote the collection of all meta-nodes and $\meta{\DT}$ denote the tree induced by $\DT$ over them. In particular, $\meta{X}$ is a parent of $\meta{Y}$ in $\meta{\DT}$ if $X\supset Y$ and there is no $\meta{Z}\in \meta{\D}$ with $X\supset Z\supset Y$. The following observation follows immediately from the construction of meta-nodes and $\meta{\DT}$.

%\begin{observation}
%\label{obs:depth}
%  Every meta-node has at least two children in the tree $\meta{\DT}$. The depth of $\meta{\DT}$ is at most $\log P$.
%\end{observation}

The following is the main lemma of this section and allows us to bound the cost of large collections of nodes in $\DT$.

\begin{lemma}
\label{lem:disjoint-sets}
  Let $\Y=\{\meta{Y}_1, \meta{Y}_2, \cdots\}$ be a collection of mutually disjoint chains. That is, for any $\meta{Y}_i, \meta{Y}_j\in\Y$, $\meta{Y}_i$ and $\meta{Y}_j$ are disjoint. Then $\sum_{\meta{Y}\in\Y}\sum_{Y\in\meta{Y}} \cost(\ST^*(Y))$ is at most $2\cost(\OPT)$.
\end{lemma}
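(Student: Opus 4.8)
The plan is to charge the Steiner-tree costs to the optimal solution $\OPT$ on an edge-by-edge basis. First I would replace each optimal Steiner tree by the subgraph that $\OPT$ itself uses: for a node $Y$ of $\DT$, let $E_Y=\cup_{t\in T_Y}P_t$ be the edges used by terminals with demand exactly $Y$ in $\OPT$. These paths all share $s$, so $E_Y$ is a connected subgraph spanning $T_Y\cup\{s\}$ and therefore contains a Steiner tree over $T_Y\cup\{s\}$; hence $\sum_{e\in\ST^*(Y)}c_e\le\sum_{e\in E_Y}c_e$ and $\cost(\ST^*(Y))\le w(Y)\sum_{e\in E_Y}c_e$. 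Since $\cost(\OPT)=\sum_e c_e\,w_{\OPT}(e)$, it then suffices to prove the following edge-wise inequality for every edge $e$:
\[
\sum_{\meta{Y}\in\Y}\ \sum_{Y\in\meta{Y}:\,e\in E_Y} w(Y)\ \le\ 2\,w_{\OPT}(e).
\]
Summing this over all edges weighted by $c_e$ yields $\sum_{\meta{Y}\in\Y}\sum_{Y\in\meta{Y}}w(Y)\sum_{e\in E_Y}c_e\le 2\cost(\OPT)$, which dominates the quantity in the statement.

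To establish the edge-wise bound, fix $e$. The crucial observation is that whenever $e\in E_Y$ some terminal $t\in T_Y$ routes through $e$ in $\OPT$, so $Y=\dem(t)\subseteq S_{\OPT}(e)$; thus every set contributing to the left-hand side is a subset of $S_{\OPT}(e)$. Now consider a single chain $\meta{Y}=\{Y_0,\dots,Y_k\}$ and let $Y_m$ be its largest (equivalently, smallest-index) set with $e\in E_{Y_m}$, discarding chains that use no such edge. Every index $i$ with $e\in E_{Y_i}$ satisfies $i\ge m$, and by property~3 of Lemma~\ref{lem:preproc} the weights decay geometrically down the chain, $w(Y_{m+j})\le 2^{-j}w(Y_m)$, so
\[
\sum_{Y\in\meta{Y}:\,e\in E_Y} w(Y)\ \le\ \sum_{i\ge m} w(Y_i)\ \le\ w(Y_m)\sum_{j\ge 0}2^{-j}\ =\ 2\,w(Y_m).
\]
This bounds each chain's contribution by twice the weight of its largest $e$-using set.

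Finally I would combine the chains. For each chain $\meta{Y}$ let $Y_{\meta{Y}}$ be its largest $e$-using set. Each $Y_{\meta{Y}}\subseteq S_{\OPT}(e)$, and because the chains in $\Y$ are mutually disjoint the sets $\{Y_{\meta{Y}}\}$ are pairwise disjoint subsets of $S_{\OPT}(e)$, whence $\sum_{\meta{Y}}w(Y_{\meta{Y}})\le w(S_{\OPT}(e))=w_{\OPT}(e)$. Combining with the per-chain bound gives $\sum_{\meta{Y}}\sum_{Y\in\meta{Y}:\,e\in E_Y}w(Y)\le 2\sum_{\meta{Y}}w(Y_{\meta{Y}})\le 2w_{\OPT}(e)$, as required. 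The step I expect to be the main obstacle is getting this last combination exactly right: it is the interaction of the two structural facts that does the work — the geometric weight decay supplies the factor $2$ within each chain, while the disjointness of the chains is precisely what allows the largest $e$-using sets to be packed as disjoint subsets of $S_{\OPT}(e)$, so that their total weight is controlled by $w_{\OPT}(e)$ rather than accumulating one chain at a time.
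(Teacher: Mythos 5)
Your proof is correct and follows essentially the same route as the paper's: both replace $\ST^*(Y)$ by the Steiner tree induced by $\OPT$'s paths for $T_Y$, use the geometric weight decay along a chain to bound the per-chain contribution on each edge by twice the largest $e$-using set, and use disjointness of the chains to pack those largest sets into $S_{\OPT}(e)$. The only difference is organizational — the paper first splits $\cost(\OPT)$ into per-chain restrictions $\OPT(\meta{Y})$ and compares a sum to a max within each chain, whereas you run the whole argument edge-by-edge — but the underlying charging is identical.
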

\begin{proof}
For $\meta{Y}\in\Y$, let $\OPT(\meta{Y})$ denote $\OPT(\cup_{Y\in\meta{Y}} T_Y)$, the restriction of the optimal solution to the terminals associated with sets in $\meta{Y}$. Note that because the chains in $\Y$ are disjoint, $\sum_{\meta{Y}\in\Y} \cost(\OPT(\meta{Y}))\le \cost(\OPT)$. Therefore, for the rest of the proof we will argue that for any $\meta{Y}\in\Y$, $\sum_{Y\in\meta{Y}} \cost(\ST^*(Y))\le 2\cost(\OPT(\meta{Y}))$, and this will imply the lemma.

To prove the claim, fix a chain $\meta{Y}\in\Y$, and let $\meta{Y} = \{Y_0, Y_1, \cdots, Y_k\}$ where $Y_0\supset Y_1\supset\cdots\supset Y_k$, and $w(Y_i)\ge 2w(Y_{i+1})$ for all $i<k$. Recall that $\OPT(\meta{Y})$ contains a path for every terminal in $\cup_{Y\in\meta{Y}} T_Y$. Let $\Paths_Y$ denote the collection of paths for terminals in $T_Y$. We say that $e\in\Paths_Y$ if $e\in \cup_{t\in T_Y} P_t$. Let $\ST(Y)$ denote the Steiner tree over $T_Y\cup\{s\}$ defined by $\Paths_Y$ and note that $\cost(\ST^*(Y)) \le \cost(\ST(Y)) = w(Y)\sum_{e\in\Paths_Y} c_e$. Then, 
\[\sum_{Y\in\meta{Y}} \cost(\ST^*(Y))\le \sum_{e\in E} c_e \sum_{Y\in\meta{Y}: \Paths_Y\ni e} w(Y)\] 
On the other hand, because of the containment structure of sets in $\meta{Y}$, \[\cost(\OPT(\meta{Y})) = \sum_{e\in E} c_e \max_{Y\in\meta{Y}: \Paths_Y\ni e} w(Y)\]

To conclude the proof we claim that for every edge $e$, $\sum_{Y\in\meta{Y}: \Paths_Y\ni e} w(Y)\le 2\max_{Y\in\meta{Y}: \Paths_Y\ni e} w(Y)$. But this is easy to see because weights of sets $Y\in\meta{Y}$ are geometrically decreasing by a factor of at least two.
\end{proof}

Next we show that the tree $\DT$ can be decomposed into at most $\log P$ different collections of mutually disjoint chains. This along with Lemma~\ref{lem:disjoint-sets} will allow us to prove our desired approximation.
\begin{lemma}
\label{lem:decomp}
  Any demand tree $\DT$ can be decomposed into at most $\log P$ different collections of mutually disjoint chains $\Y_1, \Y_2, \cdots, \Y_k$, such that each node in $\DT$ belongs to exactly one collection.  Here $P$ is the number of nodes in $\DT$.
\end{lemma}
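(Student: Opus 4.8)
The plan is to produce the chains via a \emph{heavy-path decomposition} of $\DT$ (based on subtree sizes) and to group the resulting chains into collections according to how many ``light'' edges separate them from the root. For each node $X$ let $n_X$ be the number of nodes in the subtree of $\DT$ rooted at $X$, so $n_{\U}=P$. At every internal node I mark the edge to a child of maximum subtree size as \emph{heavy} (ties broken arbitrarily) and every other child-edge as \emph{light}. The heavy edges then decompose $\DT$ into maximal vertex-disjoint downward paths, the heavy paths, which cover every node exactly once; each heavy path is a sequence of consecutive parent--child nodes and is therefore a chain $\meta{Y}$ in the sense of this section. Note this construction is purely combinatorial and does not use the packet weights.

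To form the collections, for a chain $\meta{Y}=\{Y_0,Y_1,\dots,Y_k\}$ with start node $Y_0$, I let $d(\meta{Y})$ be the number of light edges on the path from the root $\U$ down to $Y_0$ in $\DT$, and I set $\Y_d=\{\meta{Y}:d(\meta{Y})=d\}$. Since the heavy paths partition the nodes of $\DT$, each node lies in a unique chain and hence in a unique collection. It then remains to (i) bound the number of nonempty collections, and (ii) verify that the chains inside a single collection are mutually disjoint.

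For (i) I use the standard halving argument: if $u\to v$ is a light edge then $v$ is not the heaviest child of $u$, so writing $h$ for the heavy child we have $n_v\le n_h$ and thus $n_u\ge n_v+n_h\ge 2n_v$. Hence each light edge at least halves the subtree size, so any root-to-node path in $\DT$ contains at most $\lfloor\log_2 P\rfloor$ light edges. Consequently $d(\meta{Y})\in\{0,1,\dots,\lfloor\log_2 P\rfloor\}$, giving at most $\lfloor\log_2 P\rfloor+1=O(\log P)$ collections, which matches the claimed bound.

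The crux is (ii). First observe that for a downward chain $\meta{Y}=\{Y_0\supset Y_1\supset\cdots\supset Y_k\}$ the union $\cup_{Y\in\meta{Y}}Y$ equals its start set $Y_0$; by laminarity two chains are disjoint exactly when their start sets are disjoint, i.e.\ exactly when their start nodes are incomparable in $\DT$. So it suffices to show that two chains in the same collection have incomparable starts. The key observation is that the top node of any heavy path is reached from its parent by a \emph{light} edge (otherwise the heavy path would extend strictly upward), unless it is the root. Now suppose two chains $\meta{Y},\meta{Y}'$ have comparable starts, say $Y_0$ is a strict ancestor of $Y_0'$. Then the root-to-$Y_0'$ path extends the root-to-$Y_0$ path and, because $Y_0'$ lies strictly below $Y_0$, it traverses at least the light edge entering $Y_0'$ (which is strictly below $Y_0$); hence it contains strictly more light edges, giving $d(\meta{Y}')\ge d(\meta{Y})+1$, so the two chains fall in different collections. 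Contrapositively, chains in a common collection have incomparable starts and are therefore disjoint. I expect step (ii) to be the main obstacle: the conceptual work lies in the reduction ``disjoint unions $\iff$ incomparable starts'' (which collapses each chain to its top since its sets are nested) combined with the fact that heavy-path tops are entered by light edges, whereas the bound on the number of collections in (i) is the routine heavy-path halving estimate.
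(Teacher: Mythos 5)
Your proposal is correct and is essentially the paper's own argument in different clothing: the paper's recursive construction (repeatedly follow the child with the largest subtree, remove that chain, recurse on the leftover components, and merge the $j$-th collections across components) is exactly a heavy-path decomposition with chains grouped by the number of light edges above their start node, and your disjointness argument via incomparable start nodes is the same laminarity observation the paper makes about the leftover components. No gap to report.
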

\begin{proof}
  We decompose the tree by finding a long chain, removing it from the tree, and then recursing on the remaining subtrees. Given a tree $\DT$ with root $Y_0$, we start at $Y_0$ and follow a path down to a leaf. Let $m$ denote the number of nodes in $\DT$.  At any node, we consider the sizes of the subtrees rooted at the node in terms of the number of nodes in the tree; the path then moves to the child corresponding to the largest subtree. Note that all of the other remaining subtrees rooted at the node have at most $m/2$ nodes. 

Let $\meta{Y}=\{Y_0, Y_1, \cdots, Y_k\}$ be the path obtained. This collection of nodes is a chain by definition. Consider removing the nodes in $\meta{Y}$ from $\DT$. We then note the following properties: (1) Every remaining connected component of the tree is of size at most $m/2$; (2) Let $\DT_1$ and $\DT_2$ denote any two connected components. Then $(\cup_{X\in\DT_1} X)\cap(\cup_{X\in\DT_2} X)=\emptyset$, that is, the connected components are mutually disjoint.

The second property can be proved by contradiction. If two of the components are not disjoint, then by laminarity, they contain nodes $X_1$ and $X_2$ respectively such that $X_1$ is an ancestor of $X_2$ in $\DT$. Since $X_1$ and $X_2$ belong to different connected components, there must be a node on the path between them in $\DT$ that is in the chain $\meta{Y}$. Then, all ancestors of this node are in $\meta{Y}$ including $X_1$ which contradicts the fact that $X_1$ belongs to a connected component left after removing the chain.

We output $\{\meta{Y}\}$ as the first collection of mutually disjoint chains. (Note that this first collection has only one chain in it.) Now let $\DT_1, \DT_2, \cdots$ be the connected components left over in $\DT$ after removing the nodes in $\meta{Y}$. We recursively find collections of mutually disjoint chains in each of the components. Call these $\Y_{i1}, \Y_{i2}, \cdots$ etc. for the $i$th connected component. Then, we output the collections $\cup_i \Y_{ij}$ for each $j$. Since the connected components are mutually disjoint, the collections we output are also mutually disjoint. Furthermore, since the sizes of the connected components decrease by a factor of $2$ each time we go down a level of recursion, it is easy to argue that the number of collections we output are bounded by $\log P$ where $P$ is the number of nodes in the tree $\DT$ that we started out with.
\end{proof}

We now present the main theorem for this section:
\begin{theorem}
\label{thm:main-log-p}
  Let $(G,T,\D,\DT)$ be an instance of RAND that satisfies the conditions in Lemma~\ref{lem:preproc}. Then Algorithm~\ref{Algorithm:log-p} obtains a $(2\alpha\log P)$-approximation for the RAND over this instance where $P$ is the number of effectively distinct packets in the instance, and $\alpha$ is the approximation factor of the Steiner tree algorithm used in Step~\ref{step:steiner-tree} of the algorithm.
\end{theorem}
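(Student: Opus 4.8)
The plan is to combine the two structural lemmas already established: Lemma~\ref{lem:disjoint-sets}, which bounds the total cost of a single collection of mutually disjoint chains by $2\cost(\OPT)$, and Lemma~\ref{lem:decomp}, which partitions the nodes of $\DT$ into at most $\log P$ such collections. The remaining work is to verify that the solution returned by Algorithm~\ref{Algorithm:log-p} is feasible and that its cost is bounded by the sum, over all nodes $X\in\DT$, of $\cost(\ST(X))$, the cost of the approximate Steiner tree built at $X$.

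First I would argue feasibility. Each terminal $t$ lies in exactly one set $T_X$ (where $X=\dem(t)$), and the algorithm returns for $t$ the path in $\ST(X)$ connecting $t$ to $s$; since $\ST(X)$ spans $T_X\cup\{s\}$, every terminal is connected to the source, so $\Paths$ is feasible. Next I would bound the cost of $\Paths$. The subtle point is that the edge load $w_\Paths(e)$ is the weight of the union of demand sets of all terminals whose paths use $e$, not the sum; so I cannot simply add up the Steiner-tree costs edge by edge as if they were independent. Instead I would bound crudely: for each edge $e$, the distinct packets it carries come only from packets in $\cup_{X:\ST(X)\ni e} X$, so $w_\Paths(e)\le\sum_{X:\ST(X)\ni e} w(X)$, and therefore
\begin{equation*}
\cost(\Paths)=\sum_{e\in E} c_e\,w_\Paths(e)\le\sum_{e\in E} c_e\sum_{X:\ST(X)\ni e} w(X)=\sum_{X\in\D}\cost(\ST(X)).
\end{equation*}
Since $\ST(X)$ is an $\alpha$-approximate Steiner tree over $T_X\cup\{s\}$, we have $\cost(\ST(X))\le\alpha\,\cost(\ST^*(X))$, giving $\cost(\Paths)\le\alpha\sum_{X\in\D}\cost(\ST^*(X))$.

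Finally I would apply the decomposition. By Lemma~\ref{lem:decomp}, the nodes of $\DT$ split into collections $\Y_1,\dots,\Y_k$ with $k\le\log P$, each a collection of mutually disjoint chains, with every node in exactly one collection. Hence
\begin{equation*}
\sum_{X\in\D}\cost(\ST^*(X))=\sum_{i=1}^{k}\sum_{\meta{Y}\in\Y_i}\sum_{Y\in\meta{Y}}\cost(\ST^*(Y))\le\sum_{i=1}^{k} 2\cost(\OPT)\le 2\log P\cdot\cost(\OPT),
\end{equation*}
where the middle inequality is exactly Lemma~\ref{lem:disjoint-sets} applied to each collection. Combining with the bound from the previous paragraph yields $\cost(\Paths)\le 2\alpha\log P\cdot\cost(\OPT)$, as claimed.

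I expect the main obstacle to be the second paragraph rather than the third: one must be careful that the naive edge-by-edge upper bound $w_\Paths(e)\le\sum_{X:\ST(X)\ni e} w(X)$ genuinely holds (it does, since the union weight is at most the sum of weights), and that this is exactly the quantity the chain lemma controls. The decomposition step is then a direct plug-in of the two lemmas, so the only real content is setting up the accounting so that each Steiner-tree cost is charged to precisely one chain in one collection.
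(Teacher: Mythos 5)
Your proof is correct and follows essentially the same route as the paper's: bound the cost of the returned paths by $\sum_{X\in\D}\cost(\ST(X))$, compare each $\ST(X)$ to $\ST^*(X)$ via the $\alpha$-approximation, and then apply Lemma~\ref{lem:decomp} to split the nodes into at most $\log P$ collections, each controlled by Lemma~\ref{lem:disjoint-sets}. Your explicit edge-by-edge justification of $w_\Paths(e)\le\sum_{X:\ST(X)\ni e}w(X)$ is a point the paper leaves implicit, and it is a correct and welcome addition.
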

\begin{proof}
  For every set $X\in\D$, let $\ST(X)$ denote the Steiner tree built by our algorithm over $T_X\cup\{s\}$. Then we have $\cost(\ST(X))\le\alpha\cost(\ST^*(X))$. We first use Lemma~\ref{lem:decomp} to decompose the tree $\DT$ into at most $\log P$ collections of mutually disjoint chains. Call these collections $\Y_1, \Y_2, \cdots, \Y_k$.

%Let $\X_k$ denote the collection of all meta-nodes in $\meta{\D}$ that are at a depth of $k$ in the tree $\meta{\DT}$. Note that the demand sets that define such a collection of meta-nodes are mutually disjoint.

The total cost of our solution can now be written as
\begin{align*}
\sum_{Y\in\D} \cost(\ST(Y)) & = \sum_{i\le k} \sum_{\meta{Y}\in\Y_i}\sum_{Y\in\meta{Y}} \cost(\ST(Y))\\
& \le \alpha \sum_{i\le k} \sum_{\meta{Y}\in\Y_i}\sum_{Y\in\meta{Y}} \cost(\ST^*(Y))\\
& \le \alpha \sum_{i\le k} 2\cost(\OPT)\\
& = 2\alpha k\ \cost(\OPT)
\end{align*}
Here the second inequality follows by applying Lemma~\ref{lem:disjoint-sets}. The theorem now follows by noting that $k\le\log P$.
\end{proof}

Combining Theorem~\ref{thm:main-log-p} with Lemma~\ref{lem:preproc} we get the following result.
\begin{corollary}
  Algorithms~\ref{Algorithm:Preproc} and \ref{Algorithm:log-p} together obtain a $(4\alpha\log P)$-approximation for the RAND where $\alpha$ is the approximation factor of the Steiner tree algorithm used in the algorithm.
\end{corollary}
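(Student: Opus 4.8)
The plan is to chain together the two results already established: the guarantee of the preprocessing step (Lemma~\ref{lem:preproc}) and the approximation guarantee on preprocessed instances (Theorem~\ref{thm:main-log-p}). First I would run Algorithm~\ref{Algorithm:Preproc} on the given instance $(G,T,\D,\DT)$ to obtain a modified instance $(G,T,\D',\DT')$. By Lemma~\ref{lem:preproc} this instance satisfies properties 1--5; in particular property 3 holds, which is precisely the precondition required to invoke Theorem~\ref{thm:main-log-p}.

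Next I would run Algorithm~\ref{Algorithm:log-p} on $(G,T,\D',\DT')$ to produce a collection of paths $\Paths$. By Theorem~\ref{thm:main-log-p}, the cost of $\Paths$ measured under the preprocessed demands $\D'$ is at most $2\alpha\log P'$ times the optimal cost for the preprocessed instance, where $P'$ is the number of nodes in $\DT'$. Since preprocessing only merges nodes, $P'\le P$, so this bound holds with the original $P$ as well.

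Two translations then remain. First, the cost above is measured under $\D'$, but the desired guarantee is stated with respect to the original instance; here I would use property 1, namely $\dem'(t)\supseteq\dem(t)$ for every $t$. Since enlarging demand sets can only increase the load $w_{\Paths}(e)$ on each edge, the cost of $\Paths$ under the original demands $\D$ is no larger than its cost under $\D'$. Second, I would use property 4 to bound the optimal cost of the preprocessed instance by $2\cost(\OPT)$, twice the original optimum. Combining these gives $\cost_{\D}(\Paths)\le\cost_{\D'}(\Paths)\le 2\alpha\log P\cdot 2\cost(\OPT)=4\alpha\log P\cdot\cost(\OPT)$.

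Finally, I would invoke property 5 to conclude that $\Paths$, being feasible for the preprocessed instance, is also feasible for the original one, yielding the claimed $(4\alpha\log P)$-approximation. The argument is essentially bookkeeping rather than new mathematics; the only point requiring care is keeping straight which demand family each cost is measured under, and observing that the two factors of $2$---one from the loss in optimum during merging (property 4), the other from the chain-decomposition analysis inside Theorem~\ref{thm:main-log-p}---multiply to produce the final constant.
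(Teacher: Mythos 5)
Your proof is correct and follows exactly the route the paper intends: the paper gives no explicit proof, simply stating that the corollary follows by combining Theorem~\ref{thm:main-log-p} with Lemma~\ref{lem:preproc}, and your chaining of the $2\alpha\log P$ factor with the factor-$2$ loss from property 4 (plus the feasibility and cost translations via properties 1 and 5) is precisely that combination, carried out with appropriate care about which demand family each cost is measured under.
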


We conclude this section by noting that Algorithm~\ref{Algorithm:log-p} can be implemented in a simple combinatorial fashion in $O(n^3)$ time as a generalization of Prim's algorithm for the minimum spanning tree problem as follows. This version of the algorithm obtains an $(8\log P)$-approximation.
\begin{enumerate}
\item Let $\anc(t)$ denote the set of ancestors of $t$ (those with demands that are strict supersets of $\dem(t)$), and $\peer(t)$ denote the set of its peers (those with demands identical to that of $t$).
\item At any step call a terminal $t$ eligible if all of its ancestors are already connected to the root.
\item Initialize $W=\{s\}$.
\item Let $\Delta(t)$ denote the distance in $G$ from $t$ to its closest node in $W\cap(\anc(t)\cup\peer(t))$.
\item While $W\ne T$, pick the eligible terminal with the smallest $\Delta(t)$ and connect it to its closest node in $W\cap(\anc(t)\cup\peer(t))$. Update $W=W\cup\{t\}$ and update $\Delta$.
\end{enumerate}

\bibliographystyle{plain}
\bibliography{rand}

\end{document}